\theoremstyle{plain}
\newtheorem{theorem}{Theorem}[section]
\newtheorem*{theorem*}{Theorem}
\newtheorem{lemma}[theorem]{Lemma}
\theoremstyle{definition}
\newtheorem*{note*}{Note}
\newtheorem*{notes*}{Notes}
\newtheorem*{null*}{}
\newtheorem{remark}[theorem]{Remark}
\newtheorem*{remark*}{Remark}
\newcommand{\aarg}[1]{\AgdaArgument{#1}}
\newcommand{\afun}[1]{\AgdaFunction{#1}}
\newcommand{\atyp}[1]{\AgdaPrimitiveType{#1}}
\newcommand{\eq}{\mathrel{\afun{≡}}}
\newcommand{\fun}{\mathop{\AgdaSymbol{→}}}
\newcommand{\heq}{\mathrel{\afun{≡≡}}}
\title{Typal Heterogeneous Equality Types}
\author{Andrew M.Pitts}
\date{University of Cambridge, UK} 
\begin{document}

\maketitle

\begin{abstract}
  The usual homogeneous form of equality type in Martin-L\"of Type
  Theory contains identifications between elements of the same
  type. By contrast, the heterogeneous form of equality contains
  identifications between elements of possibly different types. This
  paper introduces a simple set of axioms for such types. The axioms
  are equivalent to the combination of systematic elimination rules
  for both forms of equality, albeit with typal (also known as
  ``propositional'') computation properties, together with Streicher's
  Axiom~K, or equivalently, the principle of uniqueness of identity
  proofs.
\end{abstract}

\section{Introduction}
\label{sec:int}

Equality types in the intensional version of Martin-L\"of Type
Theory~\citep[see for example][Section~8.1]{NordstromB:progmlt} are
traditionally formulated in terms of an introduction rule
(reflexivity) together with a rule for eliminating proofs of equality
and a rule describing how elimination computes when it meets a
reflexivity proof. Some recent work
\citep{CoquandT:modttc,CoquandT:cubttc} on models of Homotopy Type
Theory~\citep{HoTT} uses a formulation of equality types that differs
from this in two respects. First, the elimination operation is
replaced by the combination of a simple operation for transporting
elements along proofs of equality, together with an axiom asserting
contractibility of singleton types. Secondly, the analogue of the
computation rule for the eliminator, namely that transporting along a
reflexivity proof does nothing, is weakened from a judgemental
equality to the existence of an element of the corresponding equality
type; see~\cite{CoquandT:equdtt} and Figure~2 in
\citep{CoquandT:modttc}.  This formulation is sometimes called a
``propositional'' equality type~\citep{VanDenBergB:patcpi}, but here I
will follow \citet[Section~1.6]{ShulmanM:brofpt} for the reasons given
there and refer to \emph{typal} equality types.  Although these
changes to the formulation of equality types affect computation, it
seems that they do not change what is provable (see
\cite{BoulierS:weattr} for example) and they make it easier to
construct models. Furthermore, they can lead to surprising
simplifications. For example, Lumsdaine~[private communication] has
observed that the computation rule is superfluous (for elimination,
but the observation also holds for transport): if a proto-identity
type has a transport operation lacking its typal computation property,
then the operation can be corrected to a new one that does have the
computation property (see Lemma~\ref{lem:verlt} and the Appendix).

\begin{figure}

\centering

\begin{code}[hide]%
\>[0]\AgdaSymbol{\{-\#}\AgdaSpace{}%
\AgdaKeyword{OPTIONS}\AgdaSpace{}%
\AgdaPragma{--with-K}\AgdaSpace{}%
\AgdaSymbol{\#-\}}\<%
\\
\\[\AgdaEmptyExtraSkip]%
\>[0]\AgdaKeyword{module}\AgdaSpace{}%
\AgdaModule{typhet}\AgdaSpace{}%
\AgdaKeyword{where}\<%
\\
\>[0]\<%
\\
\>[0]\AgdaKeyword{open}\AgdaSpace{}%
\AgdaKeyword{import}\AgdaSpace{}%
\AgdaModule{Agda.Primitive}\<%
\\
\\[\AgdaEmptyExtraSkip]%
\>[0]\AgdaKeyword{module}\AgdaSpace{}%
\AgdaModule{Axioms}\AgdaSpace{}%
\AgdaKeyword{where}\<%
\\
\>[0][@{}l@{\AgdaIndent{0}}]%
\>[2]\AgdaKeyword{infix}%
\>[9]\AgdaNumber{1}\AgdaSpace{}%
\AgdaPostulate{∑}\AgdaSpace{}%
\AgdaOperator{\AgdaFunction{proof\AgdaUnderscore{}}}\<%
\\
\>[2]\AgdaKeyword{infixr}\AgdaSpace{}%
\AgdaNumber{2}\AgdaSpace{}%
\AgdaOperator{\AgdaFunction{\AgdaUnderscore{}≡≡[\AgdaUnderscore{}]\AgdaUnderscore{}}}\<%
\\
\>[2]\AgdaKeyword{infix}%
\>[9]\AgdaNumber{3}\AgdaSpace{}%
\AgdaOperator{\AgdaFunction{\AgdaUnderscore{}qed}}\<%
\\
\>[2]\AgdaKeyword{infix}%
\>[9]\AgdaNumber{4}\AgdaSpace{}%
\AgdaOperator{\AgdaPostulate{\AgdaUnderscore{}≡≡\AgdaUnderscore{}}}\AgdaSpace{}%
\AgdaOperator{\AgdaFunction{\AgdaUnderscore{}≡\AgdaUnderscore{}}}\<%
\end{code}

Axioms for typal heterogeneous equality satisfying Axiom~K

\AgdaTarget{≡≡}
\AgdaTarget{≡}
\AgdaTarget{rfl}
\AgdaTarget{ctr}
\AgdaTarget{eqt}
\AgdaTarget{tpt}
\AgdaTarget{∑}
\AgdaTarget{\_,\_}
\AgdaTarget{fst}
\AgdaTarget{snd}
\AgdaTarget{fpr}
\AgdaTarget{spr}
\AgdaTarget{eta}
\AgdaNoSpaceAroundCode{}

\begin{AgdaAlign}
\begin{code}  %
\>[2]\AgdaKeyword{postulate}\<%
\\
\>[2][@{}l@{\AgdaIndent{0}}]%
\>[4]\AgdaOperator{\AgdaPostulate{\AgdaUnderscore{}≡≡\AgdaUnderscore{}}}\AgdaSpace{}%
\AgdaSymbol{:}\AgdaSpace{}%
\AgdaSymbol{∀\{}\AgdaBound{l}\AgdaSymbol{\}\{}\AgdaBound{A}\AgdaSpace{}%
\AgdaBound{B}\AgdaSpace{}%
\AgdaSymbol{:}\AgdaSpace{}%
\AgdaPrimitiveType{Set}\AgdaSpace{}%
\AgdaBound{l}\AgdaSymbol{\}}\AgdaSpace{}%
\AgdaSymbol{→}\AgdaSpace{}%
\AgdaBound{A}\AgdaSpace{}%
\AgdaSymbol{→}\AgdaSpace{}%
\AgdaBound{B}\AgdaSpace{}%
\AgdaSymbol{→}\AgdaSpace{}%
\AgdaPrimitiveType{Set}\AgdaSpace{}%
\AgdaBound{l}\<%
\\
\\[\AgdaEmptyExtraSkip]%
\>[2]\AgdaComment{-- the derived homogeneous equality}\<%
\\
\>[2]\AgdaOperator{\AgdaFunction{\AgdaUnderscore{}≡\AgdaUnderscore{}}}%
\>[9]\AgdaSymbol{:}\AgdaSpace{}%
\AgdaSymbol{∀\{}\AgdaBound{l}\AgdaSymbol{\}\{}\AgdaBound{A}\AgdaSpace{}%
\AgdaSymbol{:}\AgdaSpace{}%
\AgdaPrimitiveType{Set}\AgdaSpace{}%
\AgdaBound{l}\AgdaSymbol{\}}\AgdaSpace{}%
\AgdaSymbol{→}\AgdaSpace{}%
\AgdaBound{A}\AgdaSpace{}%
\AgdaSymbol{→}\AgdaSpace{}%
\AgdaBound{A}\AgdaSpace{}%
\AgdaSymbol{→}\AgdaSpace{}%
\AgdaPrimitiveType{Set}\AgdaSpace{}%
\AgdaBound{l}\<%
\\
\>[2]\AgdaBound{x}\AgdaSpace{}%
\AgdaOperator{\AgdaFunction{≡}}\AgdaSpace{}%
\AgdaBound{y}%
\>[9]\AgdaSymbol{=}\AgdaSpace{}%
\AgdaBound{x}\AgdaSpace{}%
\AgdaOperator{\AgdaPostulate{≡≡}}\AgdaSpace{}%
\AgdaBound{y}\<%
\\
\\[\AgdaEmptyExtraSkip]%
\>[2]\AgdaKeyword{postulate}\<%
\\
\>[2][@{}l@{\AgdaIndent{0}}]%
\>[4]\AgdaPostulate{rfl}%
\>[9]\AgdaSymbol{:}\AgdaSpace{}%
\AgdaSymbol{∀\{}\AgdaBound{l}\AgdaSymbol{\}\{}\AgdaBound{A}\AgdaSpace{}%
\AgdaSymbol{:}\AgdaSpace{}%
\AgdaPrimitiveType{Set}\AgdaSpace{}%
\AgdaBound{l}\AgdaSymbol{\}}\AgdaSpace{}%
\AgdaSymbol{(}\AgdaBound{x}\AgdaSpace{}%
\AgdaSymbol{:}\AgdaSpace{}%
\AgdaBound{A}\AgdaSymbol{)}\AgdaSpace{}%
\AgdaSymbol{→}\AgdaSpace{}%
\AgdaBound{x}\AgdaSpace{}%
\AgdaOperator{\AgdaFunction{≡}}\AgdaSpace{}%
\AgdaBound{x}\<%
\\
\>[4]\AgdaPostulate{ctr}%
\>[9]\AgdaSymbol{:}\AgdaSpace{}%
\AgdaSymbol{∀\{}\AgdaBound{l}\AgdaSymbol{\}\{}\AgdaBound{A}\AgdaSpace{}%
\AgdaBound{B}\AgdaSpace{}%
\AgdaSymbol{:}\AgdaSpace{}%
\AgdaPrimitiveType{Set}\AgdaSpace{}%
\AgdaBound{l}\AgdaSymbol{\}\{}\AgdaBound{x}\AgdaSpace{}%
\AgdaSymbol{:}\AgdaSpace{}%
\AgdaBound{A}\AgdaSymbol{\}\{}\AgdaBound{y}\AgdaSpace{}%
\AgdaSymbol{:}\AgdaSpace{}%
\AgdaBound{B}\AgdaSymbol{\}(}\AgdaBound{e}\AgdaSpace{}%
\AgdaSymbol{:}\AgdaSpace{}%
\AgdaBound{x}\AgdaSpace{}%
\AgdaOperator{\AgdaPostulate{≡≡}}\AgdaSpace{}%
\AgdaBound{y}\AgdaSymbol{)}\AgdaSpace{}%
\AgdaSymbol{→}\AgdaSpace{}%
\AgdaPostulate{rfl}\AgdaSpace{}%
\AgdaBound{x}\AgdaSpace{}%
\AgdaOperator{\AgdaPostulate{≡≡}}\AgdaSpace{}%
\AgdaBound{e}\<%
\\
\>[4]\AgdaPostulate{eqt}%
\>[9]\AgdaSymbol{:}\AgdaSpace{}%
\AgdaSymbol{∀\{}\AgdaBound{l}\AgdaSymbol{\}\{}\AgdaBound{A}\AgdaSpace{}%
\AgdaBound{B}\AgdaSpace{}%
\AgdaSymbol{:}\AgdaSpace{}%
\AgdaPrimitiveType{Set}\AgdaSpace{}%
\AgdaBound{l}\AgdaSymbol{\}\{}\AgdaBound{x}\AgdaSpace{}%
\AgdaSymbol{:}\AgdaSpace{}%
\AgdaBound{A}\AgdaSymbol{\}\{}\AgdaBound{y}\AgdaSpace{}%
\AgdaSymbol{:}\AgdaSpace{}%
\AgdaBound{B}\AgdaSymbol{\}}\AgdaSpace{}%
\AgdaSymbol{→}\AgdaSpace{}%
\AgdaBound{x}\AgdaSpace{}%
\AgdaOperator{\AgdaPostulate{≡≡}}\AgdaSpace{}%
\AgdaBound{y}\AgdaSpace{}%
\AgdaSymbol{→}\AgdaSpace{}%
\AgdaBound{A}\AgdaSpace{}%
\AgdaOperator{\AgdaFunction{≡}}\AgdaSpace{}%
\AgdaBound{B}\<%
\\
\>[4]\AgdaPostulate{tpt}%
\>[9]\AgdaSymbol{:}%
\>[91I]\AgdaSymbol{∀\{}\AgdaBound{l}\AgdaSpace{}%
\AgdaBound{m}\AgdaSpace{}%
\AgdaBound{n}\AgdaSymbol{\}\{}\AgdaBound{A}\AgdaSpace{}%
\AgdaSymbol{:}\AgdaSpace{}%
\AgdaPrimitiveType{Set}\AgdaSpace{}%
\AgdaBound{l}\AgdaSymbol{\}\{}\AgdaBound{B}\AgdaSpace{}%
\AgdaSymbol{:}\AgdaSpace{}%
\AgdaBound{A}\AgdaSpace{}%
\AgdaSymbol{→}\AgdaSpace{}%
\AgdaPrimitiveType{Set}\AgdaSpace{}%
\AgdaBound{m}\AgdaSymbol{\}(}\AgdaBound{C}\AgdaSpace{}%
\AgdaSymbol{:}\AgdaSpace{}%
\AgdaSymbol{(}\AgdaBound{x}\AgdaSpace{}%
\AgdaSymbol{:}\AgdaSpace{}%
\AgdaBound{A}\AgdaSymbol{)}\AgdaSpace{}%
\AgdaSymbol{→}\AgdaSpace{}%
\AgdaBound{B}\AgdaSpace{}%
\AgdaBound{x}\AgdaSpace{}%
\AgdaSymbol{→}\AgdaSpace{}%
\AgdaPrimitiveType{Set}\AgdaSpace{}%
\AgdaBound{n}\AgdaSymbol{)}\<%
\\
\>[.][@{}l@{}]\<[91I]%
\>[11]\AgdaSymbol{\{}\AgdaBound{x}\AgdaSpace{}%
\AgdaBound{x′}\AgdaSpace{}%
\AgdaSymbol{:}\AgdaSpace{}%
\AgdaBound{A}\AgdaSymbol{\}\{}\AgdaBound{y}\AgdaSpace{}%
\AgdaSymbol{:}\AgdaSpace{}%
\AgdaBound{B}\AgdaSpace{}%
\AgdaBound{x}\AgdaSymbol{\}\{}\AgdaBound{y′}\AgdaSpace{}%
\AgdaSymbol{:}\AgdaSpace{}%
\AgdaBound{B}\AgdaSpace{}%
\AgdaBound{x′}\AgdaSymbol{\}}\AgdaSpace{}%
\AgdaSymbol{→}\AgdaSpace{}%
\AgdaBound{x}\AgdaSpace{}%
\AgdaOperator{\AgdaFunction{≡}}\AgdaSpace{}%
\AgdaBound{x′}\AgdaSpace{}%
\AgdaSymbol{→}\AgdaSpace{}%
\AgdaBound{y}\AgdaSpace{}%
\AgdaOperator{\AgdaPostulate{≡≡}}\AgdaSpace{}%
\AgdaBound{y′}\AgdaSpace{}%
\AgdaSymbol{→}\AgdaSpace{}%
\AgdaBound{C}\AgdaSpace{}%
\AgdaBound{x}\AgdaSpace{}%
\AgdaBound{y}\AgdaSpace{}%
\AgdaSymbol{→}\AgdaSpace{}%
\AgdaBound{C}\AgdaSpace{}%
\AgdaBound{x′}\AgdaSpace{}%
\AgdaBound{y′}\<%
\end{code}
           
\bigskip

Axioms for $\Sigma$-types with surjective pairing

\begin{code} %
\>[2]\AgdaKeyword{postulate}\<%
\\
\>[2][@{}l@{\AgdaIndent{0}}]%
\>[4]\AgdaPostulate{∑}%
\>[12]\AgdaSymbol{:}\AgdaSpace{}%
\AgdaSymbol{∀\{}\AgdaBound{l}\AgdaSpace{}%
\AgdaBound{m}\AgdaSymbol{\}(}\AgdaBound{A}\AgdaSpace{}%
\AgdaSymbol{:}\AgdaSpace{}%
\AgdaPrimitiveType{Set}\AgdaSpace{}%
\AgdaBound{l}\AgdaSymbol{)(}\AgdaBound{B}\AgdaSpace{}%
\AgdaSymbol{:}\AgdaSpace{}%
\AgdaBound{A}\AgdaSpace{}%
\AgdaSymbol{→}\AgdaSpace{}%
\AgdaPrimitiveType{Set}\AgdaSpace{}%
\AgdaBound{m}\AgdaSymbol{)}\AgdaSpace{}%
\AgdaSymbol{→}\AgdaSpace{}%
\AgdaPrimitiveType{Set}\AgdaSpace{}%
\AgdaSymbol{(}\AgdaBound{l}\AgdaSpace{}%
\AgdaOperator{\AgdaPrimitive{⊔}}\AgdaSpace{}%
\AgdaBound{m}\AgdaSymbol{)}\<%
\\
\\[\AgdaEmptyExtraSkip]%
\>[2]\AgdaKeyword{module}\AgdaSpace{}%
\AgdaModule{\AgdaUnderscore{}}\AgdaSpace{}%
\AgdaSymbol{\{}\AgdaBound{l}\AgdaSpace{}%
\AgdaBound{m}\AgdaSymbol{\}\{}\AgdaBound{A}\AgdaSpace{}%
\AgdaSymbol{:}\AgdaSpace{}%
\AgdaPrimitiveType{Set}\AgdaSpace{}%
\AgdaBound{l}\AgdaSymbol{\}\{}\AgdaBound{B}\AgdaSpace{}%
\AgdaSymbol{:}\AgdaSpace{}%
\AgdaBound{A}\AgdaSpace{}%
\AgdaSymbol{→}\AgdaSpace{}%
\AgdaPrimitiveType{Set}\AgdaSpace{}%
\AgdaBound{m}\AgdaSymbol{\}}\AgdaSpace{}%
\AgdaKeyword{where}\<%
\end{code}
\begin{code}[hide]%
\>[2][@{}l@{\AgdaIndent{1}}]%
\>[4]\AgdaKeyword{infix}\AgdaSpace{}%
\AgdaNumber{3}\AgdaSpace{}%
\AgdaOperator{\AgdaPostulate{\AgdaUnderscore{},\AgdaUnderscore{}}}\<%
\end{code}
\begin{code} %
\>[4]\AgdaKeyword{postulate}\<%
\\
\>[4][@{}l@{\AgdaIndent{0}}]%
\>[6]\AgdaOperator{\AgdaPostulate{\AgdaUnderscore{},\AgdaUnderscore{}}}%
\>[11]\AgdaSymbol{:}\AgdaSpace{}%
\AgdaSymbol{(}\AgdaBound{x}\AgdaSpace{}%
\AgdaSymbol{:}\AgdaSpace{}%
\AgdaBound{A}\AgdaSymbol{)}\AgdaSpace{}%
\AgdaSymbol{→}\AgdaSpace{}%
\AgdaBound{B}\AgdaSpace{}%
\AgdaBound{x}\AgdaSpace{}%
\AgdaSymbol{→}\AgdaSpace{}%
\AgdaPostulate{∑}\AgdaSpace{}%
\AgdaBound{A}\AgdaSpace{}%
\AgdaBound{B}\<%
\\
\>[6]\AgdaPostulate{fst}%
\>[11]\AgdaSymbol{:}\AgdaSpace{}%
\AgdaPostulate{∑}\AgdaSpace{}%
\AgdaBound{A}\AgdaSpace{}%
\AgdaBound{B}\AgdaSpace{}%
\AgdaSymbol{→}\AgdaSpace{}%
\AgdaBound{A}\<%
\\
\>[6]\AgdaPostulate{snd}%
\>[11]\AgdaSymbol{:}\AgdaSpace{}%
\AgdaSymbol{(}\AgdaBound{z}\AgdaSpace{}%
\AgdaSymbol{:}\AgdaSpace{}%
\AgdaPostulate{∑}\AgdaSpace{}%
\AgdaBound{A}\AgdaSpace{}%
\AgdaBound{B}\AgdaSymbol{)}\AgdaSpace{}%
\AgdaSymbol{→}\AgdaSpace{}%
\AgdaBound{B}\AgdaSpace{}%
\AgdaSymbol{(}\AgdaPostulate{fst}\AgdaSpace{}%
\AgdaBound{z}\AgdaSymbol{)}\<%
\\
\>[6]\AgdaPostulate{fpr}%
\>[11]\AgdaSymbol{:}\AgdaSpace{}%
\AgdaSymbol{(}\AgdaBound{x}\AgdaSpace{}%
\AgdaSymbol{:}\AgdaSpace{}%
\AgdaBound{A}\AgdaSymbol{)(}\AgdaBound{y}\AgdaSpace{}%
\AgdaSymbol{:}\AgdaSpace{}%
\AgdaBound{B}\AgdaSpace{}%
\AgdaBound{x}\AgdaSymbol{)}\AgdaSpace{}%
\AgdaSymbol{→}\AgdaSpace{}%
\AgdaPostulate{fst}\AgdaSymbol{(}\AgdaBound{x}\AgdaSpace{}%
\AgdaOperator{\AgdaPostulate{,}}\AgdaSpace{}%
\AgdaBound{y}\AgdaSymbol{)}\AgdaSpace{}%
\AgdaOperator{\AgdaFunction{≡}}\AgdaSpace{}%
\AgdaBound{x}\<%
\\
\>[6]\AgdaPostulate{spr}%
\>[11]\AgdaSymbol{:}\AgdaSpace{}%
\AgdaSymbol{(}\AgdaBound{x}\AgdaSpace{}%
\AgdaSymbol{:}\AgdaSpace{}%
\AgdaBound{A}\AgdaSymbol{)(}\AgdaBound{y}\AgdaSpace{}%
\AgdaSymbol{:}\AgdaSpace{}%
\AgdaBound{B}\AgdaSpace{}%
\AgdaBound{x}\AgdaSymbol{)}\AgdaSpace{}%
\AgdaSymbol{→}\AgdaSpace{}%
\AgdaPostulate{snd}\AgdaSymbol{(}\AgdaBound{x}\AgdaSpace{}%
\AgdaOperator{\AgdaPostulate{,}}\AgdaSpace{}%
\AgdaBound{y}\AgdaSymbol{)}\AgdaSpace{}%
\AgdaOperator{\AgdaPostulate{≡≡}}\AgdaSpace{}%
\AgdaBound{y}\<%
\\
\>[6]\AgdaPostulate{eta}%
\>[11]\AgdaSymbol{:}\AgdaSpace{}%
\AgdaSymbol{(}\AgdaBound{z}\AgdaSpace{}%
\AgdaSymbol{:}\AgdaSpace{}%
\AgdaPostulate{∑}\AgdaSpace{}%
\AgdaBound{A}\AgdaSpace{}%
\AgdaBound{B}\AgdaSymbol{)}\AgdaSpace{}%
\AgdaSymbol{→}\AgdaSpace{}%
\AgdaSymbol{(}\AgdaPostulate{fst}\AgdaSpace{}%
\AgdaBound{z}\AgdaSpace{}%
\AgdaOperator{\AgdaPostulate{,}}\AgdaSpace{}%
\AgdaPostulate{snd}\AgdaSpace{}%
\AgdaBound{z}\AgdaSymbol{)}\AgdaSpace{}%
\AgdaOperator{\AgdaFunction{≡}}\AgdaSpace{}%
\AgdaBound{z}\<%
\\
\\[\AgdaEmptyExtraSkip]%
\>[2]\AgdaComment{-- concrete syntax for ∑-types}\<%
\\
\>[2]\AgdaKeyword{syntax}\AgdaSpace{}%
\AgdaPostulate{∑}\AgdaSpace{}%
\AgdaBound{A}\AgdaSpace{}%
\AgdaSymbol{(λ}\AgdaSpace{}%
\AgdaBound{x}\AgdaSpace{}%
\AgdaSymbol{→}\AgdaSpace{}%
\AgdaBound{B}\AgdaSymbol{)}\AgdaSpace{}%
\AgdaSymbol{=}\AgdaSpace{}%
\AgdaPostulate{∑}\AgdaSpace{}%
\AgdaBound{x}\AgdaSpace{}%
\AgdaPostulate{∶}\AgdaSpace{}%
\AgdaBound{A}\AgdaSpace{}%
\AgdaPostulate{,}\AgdaSpace{}%
\AgdaBound{B}\<%
\end{code}
\end{AgdaAlign}
\AgdaSpaceAroundCode{}

\caption{The Axioms}
\label{fig:axihtpe}
\end{figure}

The above remarks apply to the usual, homogeneous notion of equality
in which elements of the same type are compared. The purpose of this
paper is to give an analogous treatment of \emph{heterogeneous}
equality~\citep{McBrideC:deptfp,AltenkirchT:obsen} in the presence of
$\Sigma$-types and the Axiom~K of
\citet[Section~1.2]{SteicherT:invitt}.  Since Axiom~K is not
compatible with the Univalence Principle of Homotopy Type
Theory~\citep[Example~3.1.9]{HoTT}, the focus here is on the simpler
(but still useful!) world of zero-dimensional type theory.  We will
see that the axioms in Fig.~\ref{fig:axihtpe} capture homogeneous and
heterogeneous equality satisfying their usual dependent elimination
and (typal) computation properties and Axiom~K, and $\Sigma$-types
with their usual dependent elimination and (typal) computation
properties.  It seems necessary to include $\Sigma$-types in order to
get Lumsdaine's result mentioned above (see Remark~\ref{rem:rolst});
the axioms we give for such types are standard, except that the
equality property of dependent second projection (\afun{spr}) is
simplified by the use of heterogeneous rather than homogeneous
equality. The axioms in the figure are pleasingly simple compared to
the usual formulation in terms of elimination and computation
properties, and may aid finding new models of heterogeneous equality
types.

The implementation of intensional Martin-L\"of Type Theory provided by
Agda~2.6~\citep{Agda} is used to state the axioms and develop their
properties. More precisely, we just make use of Agda's implementation
of a countably infinite, non-cumulative hierarchy of universes
$\atyp{Set}\,\AgdaBound{l}$, where $\AgdaBound{l}$ ranges
over a type $\atyp{Level}$ of universe levels whose closed normal
forms are in bijection with the natural numbers. The universes are
closed under dependent function types (written in Agda as
$(\aarg{x}:\aarg{A})\fun\aarg{B}\,$) and inductive types.  The use of
a whole hierarchy of universes is necessary; for example, the function
$\afun{eqt}$ in Fig.~\ref{fig:axihtpe} takes a heterogeneous equality
type $\aarg{x}\heq\aarg{y}$ in universe $\afun{Set}\,\aarg{l}$ and
produces a homogeneous one $\aarg{A}\eq\aarg{B}$ in the universe one
level up, which is denoted $\afun{Set}(\afun{lsuc}\,\aarg{l})$ in
Agda. We also use Agda's notation for infix and for implicit
arguments. For example, the function $\_{\heq}\_$ in
Fig.~\ref{fig:axihtpe} takes five arguments, the first three of which
are implicit and the last two of which are infix. In particular,
Agda's ability to infer the values of implicit arguments (or of
unspecified explicit arguments, which are denoted by an underscore,
$\,\_\,$) is used quite aggressively in what follows, in order to be
able to see the wood from the trees.

Although the code in this paper has been checked by Agda, some parts
of it that are not essential for understanding the development have
been hidden; the complete (non-literate) Agda code can be found
at~[\url{https://doi.org/10.17863/CAM.47902}]. 

\section{The axioms and their properties}

Figure~\ref{fig:axihtpe} postulates a family of types $\_{\heq}\_$ in
all universes, together with some operations on them that together
capture a typal version of \emph{heterogeneous}
equality. Heterogeneous equality types were introduced by
\citet[Section~5.1.3]{McBrideC:deptfp} under the name of ``John Major
equality''. Unlike ordinary, homogeneous equality types, such a type
$\aarg{x}\heq\aarg{y}$ relates elements $\aarg{x}$ and $\aarg{y}$ of
possibly different types, $\aarg{A}$ and $\aarg{B}$ say. The intention
is that elements of type $\aarg{x}\heq\aarg{y}$ denote proofs that not
only are $\aarg{x}$ and $\aarg{y}$ equal, but so also are their types
$\aarg{A}$ and $\aarg{B}$. The figure defines homogeneous equality
$\_{\eq}\_$ as the special case of $\_{\heq}\_$ when the types of the
two arguments are already known to be the same.  Axiom~$\afun{rfl}$
says that $\eq$ is reflexive. Axiom~$\afun{ctr}$ is a heterogeneous
version of the contractibility property of singleton types
(cf.~$\mathsf{center}$ in Figure~2 of
\cite{CoquandT:modttc}). Axiom~$\afun{eqt}$ says that heterogeneously
equal things have (homogeneously) equal types. Axiom~$\afun{tpt}$ is a
form of the transport property of equality (cf.~$\mathsf{T}$ in
Figure~2 of \cite{CoquandT:modttc}) involving both homogeneous and
heterogeneous equalities. Finally, $\afun{∑}$, $\_\afun{,}\_$,
$\afun{fst}$, $\afun{snd}$, $\afun{fpr}$, $\afun{spr}$ and
$\afun{eta}$ axiomatize dependent product types satisfying surjective
pairing.

\begin{figure}
  \AgdaTarget{symm}
  \AgdaTarget{proof\_}
  \AgdaTarget{\_≡≡[\_]\_}
  \AgdaTarget{\_qed}
  \AgdaTarget{cong}
  \AgdaTarget{cong₂}
  \AgdaTarget{cong₃}
\begin{code}%
\>[2]\AgdaFunction{symm}\AgdaSpace{}%
\AgdaSymbol{:}\AgdaSpace{}%
\AgdaSymbol{∀\{}\AgdaBound{l}\AgdaSymbol{\}\{}\AgdaBound{A}\AgdaSpace{}%
\AgdaBound{B}\AgdaSpace{}%
\AgdaSymbol{:}\AgdaSpace{}%
\AgdaPrimitiveType{Set}\AgdaSpace{}%
\AgdaBound{l}\AgdaSymbol{\}\{}\AgdaBound{x}\AgdaSpace{}%
\AgdaSymbol{:}\AgdaSpace{}%
\AgdaBound{A}\AgdaSymbol{\}\{}\AgdaBound{y}\AgdaSpace{}%
\AgdaSymbol{:}\AgdaSpace{}%
\AgdaBound{B}\AgdaSymbol{\}}\AgdaSpace{}%
\AgdaSymbol{→}\AgdaSpace{}%
\AgdaBound{x}\AgdaSpace{}%
\AgdaOperator{\AgdaPostulate{≡≡}}\AgdaSpace{}%
\AgdaBound{y}\AgdaSpace{}%
\AgdaSymbol{→}\AgdaSpace{}%
\AgdaBound{y}\AgdaSpace{}%
\AgdaOperator{\AgdaPostulate{≡≡}}\AgdaSpace{}%
\AgdaBound{x}\<%
\\
\>[2]\AgdaFunction{symm}\AgdaSpace{}%
\AgdaBound{e}\AgdaSpace{}%
\AgdaSymbol{=}\AgdaSpace{}%
\AgdaPostulate{tpt}\AgdaSpace{}%
\AgdaSymbol{(λ}\AgdaSpace{}%
\AgdaBound{\AgdaUnderscore{}}\AgdaSpace{}%
\AgdaBound{y}\AgdaSpace{}%
\AgdaSymbol{→}\AgdaSpace{}%
\AgdaBound{y}\AgdaSpace{}%
\AgdaOperator{\AgdaPostulate{≡≡}}\AgdaSpace{}%
\AgdaSymbol{\AgdaUnderscore{})}\AgdaSpace{}%
\AgdaSymbol{(}\AgdaPostulate{eqt}\AgdaSpace{}%
\AgdaBound{e}\AgdaSymbol{)}\AgdaSpace{}%
\AgdaBound{e}\AgdaSpace{}%
\AgdaSymbol{(}\AgdaPostulate{rfl}\AgdaSpace{}%
\AgdaSymbol{\AgdaUnderscore{})}\<%
\\
\\[\AgdaEmptyExtraSkip]%
\>[2]\AgdaOperator{\AgdaFunction{proof\AgdaUnderscore{}}}\AgdaSpace{}%
\AgdaSymbol{:}\AgdaSpace{}%
\AgdaSymbol{∀\{}\AgdaBound{l}\AgdaSymbol{\}\{}\AgdaBound{A}\AgdaSpace{}%
\AgdaBound{B}\AgdaSpace{}%
\AgdaSymbol{:}\AgdaSpace{}%
\AgdaPrimitiveType{Set}\AgdaSpace{}%
\AgdaBound{l}\AgdaSymbol{\}\{}\AgdaBound{x}\AgdaSpace{}%
\AgdaSymbol{:}\AgdaSpace{}%
\AgdaBound{A}\AgdaSymbol{\}\{}\AgdaBound{y}\AgdaSpace{}%
\AgdaSymbol{:}\AgdaSpace{}%
\AgdaBound{B}\AgdaSymbol{\}}\AgdaSpace{}%
\AgdaSymbol{→}\AgdaSpace{}%
\AgdaBound{x}\AgdaSpace{}%
\AgdaOperator{\AgdaPostulate{≡≡}}\AgdaSpace{}%
\AgdaBound{y}\AgdaSpace{}%
\AgdaSymbol{→}\AgdaSpace{}%
\AgdaBound{x}\AgdaSpace{}%
\AgdaOperator{\AgdaPostulate{≡≡}}\AgdaSpace{}%
\AgdaBound{y}\<%
\\
\>[2]\AgdaOperator{\AgdaFunction{proof}}\AgdaSpace{}%
\AgdaBound{p}\AgdaSpace{}%
\AgdaSymbol{=}\AgdaSpace{}%
\AgdaBound{p}\<%
\\
\\[\AgdaEmptyExtraSkip]%
\>[2]\AgdaOperator{\AgdaFunction{\AgdaUnderscore{}≡≡[\AgdaUnderscore{}]\AgdaUnderscore{}}}%
\>[294I]\AgdaSymbol{:}\AgdaSpace{}%
\AgdaSymbol{∀\{}\AgdaBound{l}\AgdaSymbol{\}\{}\AgdaBound{A}\AgdaSpace{}%
\AgdaBound{B}\AgdaSpace{}%
\AgdaBound{C}\AgdaSpace{}%
\AgdaSymbol{:}\AgdaSpace{}%
\AgdaPrimitiveType{Set}\AgdaSpace{}%
\AgdaBound{l}\AgdaSymbol{\}(}\AgdaBound{x}\AgdaSpace{}%
\AgdaSymbol{:}\AgdaSpace{}%
\AgdaBound{A}\AgdaSymbol{)\{}\AgdaBound{y}\AgdaSpace{}%
\AgdaSymbol{:}\AgdaSpace{}%
\AgdaBound{B}\AgdaSymbol{\}\{}\AgdaBound{z}\AgdaSpace{}%
\AgdaSymbol{:}\AgdaSpace{}%
\AgdaBound{C}\AgdaSymbol{\}}\AgdaSpace{}%
\AgdaSymbol{→}\<%
\\
\>[.][@{}l@{}]\<[294I]%
\>[10]\AgdaBound{x}\AgdaSpace{}%
\AgdaOperator{\AgdaPostulate{≡≡}}\AgdaSpace{}%
\AgdaBound{y}%
\>[18]\AgdaSymbol{→}\AgdaSpace{}%
\AgdaBound{y}\AgdaSpace{}%
\AgdaOperator{\AgdaPostulate{≡≡}}\AgdaSpace{}%
\AgdaBound{z}%
\>[28]\AgdaSymbol{→}\AgdaSpace{}%
\AgdaBound{x}\AgdaSpace{}%
\AgdaOperator{\AgdaPostulate{≡≡}}\AgdaSpace{}%
\AgdaBound{z}\<%
\\
\>[2]\AgdaBound{x}\AgdaSpace{}%
\AgdaOperator{\AgdaFunction{≡≡[}}\AgdaSpace{}%
\AgdaBound{e}\AgdaSpace{}%
\AgdaOperator{\AgdaFunction{]}}\AgdaSpace{}%
\AgdaBound{f}\AgdaSpace{}%
\AgdaSymbol{=}\AgdaSpace{}%
\AgdaPostulate{tpt}\AgdaSpace{}%
\AgdaSymbol{(λ}\AgdaSpace{}%
\AgdaBound{\AgdaUnderscore{}}\AgdaSpace{}%
\AgdaBound{z}\AgdaSpace{}%
\AgdaSymbol{→}\AgdaSpace{}%
\AgdaBound{x}\AgdaSpace{}%
\AgdaOperator{\AgdaPostulate{≡≡}}\AgdaSpace{}%
\AgdaBound{z}\AgdaSymbol{)}\AgdaSpace{}%
\AgdaSymbol{(}\AgdaPostulate{eqt}\AgdaSpace{}%
\AgdaBound{f}\AgdaSymbol{)}\AgdaSpace{}%
\AgdaBound{f}\AgdaSpace{}%
\AgdaBound{e}\<%
\\
\\[\AgdaEmptyExtraSkip]%
\>[2]\AgdaOperator{\AgdaFunction{\AgdaUnderscore{}qed}}\AgdaSpace{}%
\AgdaSymbol{:}\AgdaSpace{}%
\AgdaSymbol{∀\{}\AgdaBound{l}\AgdaSymbol{\}\{}\AgdaBound{A}\AgdaSpace{}%
\AgdaSymbol{:}\AgdaSpace{}%
\AgdaPrimitiveType{Set}\AgdaSpace{}%
\AgdaBound{l}\AgdaSymbol{\}(}\AgdaBound{x}\AgdaSpace{}%
\AgdaSymbol{:}\AgdaSpace{}%
\AgdaBound{A}\AgdaSymbol{)}\AgdaSpace{}%
\AgdaSymbol{→}\AgdaSpace{}%
\AgdaBound{x}\AgdaSpace{}%
\AgdaOperator{\AgdaFunction{≡}}\AgdaSpace{}%
\AgdaBound{x}\<%
\\
\>[2]\AgdaBound{x}\AgdaSpace{}%
\AgdaOperator{\AgdaFunction{qed}}\AgdaSpace{}%
\AgdaSymbol{=}\AgdaSpace{}%
\AgdaPostulate{rfl}\AgdaSpace{}%
\AgdaBound{x}\<%
\\
\\[\AgdaEmptyExtraSkip]%
\>[2]\AgdaFunction{cong}\AgdaSpace{}%
\AgdaSymbol{:}%
\>[349I]\AgdaSymbol{∀\{}\AgdaBound{l}\AgdaSpace{}%
\AgdaBound{m}\AgdaSymbol{\}\{}\AgdaBound{A}\AgdaSpace{}%
\AgdaSymbol{:}\AgdaSpace{}%
\AgdaPrimitiveType{Set}\AgdaSpace{}%
\AgdaBound{l}\AgdaSymbol{\}\{}\AgdaBound{B}\AgdaSpace{}%
\AgdaSymbol{:}\AgdaSpace{}%
\AgdaBound{A}\AgdaSpace{}%
\AgdaSymbol{→}\AgdaSpace{}%
\AgdaPrimitiveType{Set}\AgdaSpace{}%
\AgdaBound{m}\AgdaSymbol{\}(}\AgdaBound{f}\AgdaSpace{}%
\AgdaSymbol{:}\AgdaSpace{}%
\AgdaSymbol{(}\AgdaBound{x}\AgdaSpace{}%
\AgdaSymbol{:}\AgdaSpace{}%
\AgdaBound{A}\AgdaSymbol{)}\AgdaSpace{}%
\AgdaSymbol{→}\AgdaSpace{}%
\AgdaBound{B}\AgdaSpace{}%
\AgdaBound{x}\AgdaSymbol{)\{}\AgdaBound{x}\AgdaSpace{}%
\AgdaBound{y}\AgdaSpace{}%
\AgdaSymbol{:}\AgdaSpace{}%
\AgdaBound{A}\AgdaSymbol{\}}\AgdaSpace{}%
\AgdaSymbol{→}\<%
\\
\>[.][@{}l@{}]\<[349I]%
\>[9]\AgdaBound{x}\AgdaSpace{}%
\AgdaOperator{\AgdaFunction{≡}}\AgdaSpace{}%
\AgdaBound{y}\AgdaSpace{}%
\AgdaSymbol{→}\AgdaSpace{}%
\AgdaBound{f}\AgdaSpace{}%
\AgdaBound{x}\AgdaSpace{}%
\AgdaOperator{\AgdaPostulate{≡≡}}\AgdaSpace{}%
\AgdaBound{f}\AgdaSpace{}%
\AgdaBound{y}\<%
\\
\>[2]\AgdaFunction{cong}\AgdaSpace{}%
\AgdaBound{f}\AgdaSpace{}%
\AgdaSymbol{\{}\AgdaBound{x}\AgdaSymbol{\}}\AgdaSpace{}%
\AgdaBound{e}\AgdaSpace{}%
\AgdaSymbol{=}\AgdaSpace{}%
\AgdaPostulate{tpt}\AgdaSpace{}%
\AgdaSymbol{(λ}\AgdaSpace{}%
\AgdaBound{\AgdaUnderscore{}}\AgdaSpace{}%
\AgdaBound{z}\AgdaSpace{}%
\AgdaSymbol{→}\AgdaSpace{}%
\AgdaBound{f}\AgdaSpace{}%
\AgdaBound{x}\AgdaSpace{}%
\AgdaOperator{\AgdaPostulate{≡≡}}\AgdaSpace{}%
\AgdaBound{f}\AgdaSpace{}%
\AgdaBound{z}\AgdaSymbol{)}\AgdaSpace{}%
\AgdaBound{e}\AgdaSpace{}%
\AgdaBound{e}\AgdaSpace{}%
\AgdaSymbol{(}\AgdaPostulate{rfl}\AgdaSpace{}%
\AgdaSymbol{(}\AgdaBound{f}\AgdaSpace{}%
\AgdaBound{x}\AgdaSymbol{))}\<%
\\
\\[\AgdaEmptyExtraSkip]%
\>[2]\AgdaFunction{cong₂}\AgdaSpace{}%
\AgdaSymbol{:}%
\>[398I]\AgdaSymbol{∀\{}\AgdaBound{l}\AgdaSpace{}%
\AgdaBound{m}\AgdaSpace{}%
\AgdaBound{n}\AgdaSymbol{\}\{}\AgdaBound{A}\AgdaSpace{}%
\AgdaSymbol{:}\AgdaSpace{}%
\AgdaPrimitiveType{Set}\AgdaSpace{}%
\AgdaBound{l}\AgdaSymbol{\}\{}\AgdaBound{B}\AgdaSpace{}%
\AgdaSymbol{:}\AgdaSpace{}%
\AgdaBound{A}\AgdaSpace{}%
\AgdaSymbol{→}\AgdaSpace{}%
\AgdaPrimitiveType{Set}\AgdaSpace{}%
\AgdaBound{m}\AgdaSymbol{\}\{}\AgdaBound{C}\AgdaSpace{}%
\AgdaSymbol{:}\AgdaSpace{}%
\AgdaSymbol{(}\AgdaBound{x}\AgdaSpace{}%
\AgdaSymbol{:}\AgdaSpace{}%
\AgdaBound{A}\AgdaSymbol{)}\AgdaSpace{}%
\AgdaSymbol{→}\AgdaSpace{}%
\AgdaBound{B}\AgdaSpace{}%
\AgdaBound{x}\AgdaSpace{}%
\AgdaSymbol{→}\AgdaSpace{}%
\AgdaPrimitiveType{Set}\AgdaSpace{}%
\AgdaBound{n}\AgdaSymbol{\}}\<%
\\
\>[.][@{}l@{}]\<[398I]%
\>[10]\AgdaSymbol{(}\AgdaBound{f}\AgdaSpace{}%
\AgdaSymbol{:}\AgdaSpace{}%
\AgdaSymbol{(}\AgdaBound{x}\AgdaSpace{}%
\AgdaSymbol{:}\AgdaSpace{}%
\AgdaBound{A}\AgdaSymbol{)(}\AgdaBound{y}\AgdaSpace{}%
\AgdaSymbol{:}\AgdaSpace{}%
\AgdaBound{B}\AgdaSpace{}%
\AgdaBound{x}\AgdaSymbol{)}\AgdaSpace{}%
\AgdaSymbol{→}\AgdaSpace{}%
\AgdaBound{C}\AgdaSpace{}%
\AgdaBound{x}\AgdaSpace{}%
\AgdaBound{y}\AgdaSymbol{)\{}\AgdaBound{x}\AgdaSpace{}%
\AgdaBound{x′}\AgdaSpace{}%
\AgdaSymbol{:}\AgdaSpace{}%
\AgdaBound{A}\AgdaSymbol{\}\{}\AgdaBound{y}\AgdaSpace{}%
\AgdaSymbol{:}\AgdaSpace{}%
\AgdaBound{B}\AgdaSpace{}%
\AgdaBound{x}\AgdaSymbol{\}\{}\AgdaBound{y′}\AgdaSpace{}%
\AgdaSymbol{:}\AgdaSpace{}%
\AgdaBound{B}\AgdaSpace{}%
\AgdaBound{x′}\AgdaSymbol{\}}\AgdaSpace{}%
\AgdaSymbol{→}\<%
\\
\>[10]\AgdaBound{x}\AgdaSpace{}%
\AgdaOperator{\AgdaFunction{≡}}\AgdaSpace{}%
\AgdaBound{x′}\AgdaSpace{}%
\AgdaSymbol{→}\AgdaSpace{}%
\AgdaBound{y}\AgdaSpace{}%
\AgdaOperator{\AgdaPostulate{≡≡}}\AgdaSpace{}%
\AgdaBound{y′}\AgdaSpace{}%
\AgdaSymbol{→}\AgdaSpace{}%
\AgdaBound{f}\AgdaSpace{}%
\AgdaBound{x}\AgdaSpace{}%
\AgdaBound{y}\AgdaSpace{}%
\AgdaOperator{\AgdaPostulate{≡≡}}\AgdaSpace{}%
\AgdaBound{f}\AgdaSpace{}%
\AgdaBound{x′}\AgdaSpace{}%
\AgdaBound{y′}\<%
\\
\>[2]\AgdaFunction{cong₂}\AgdaSpace{}%
\AgdaBound{f}\AgdaSpace{}%
\AgdaSymbol{\{}\AgdaBound{x}\AgdaSymbol{\}}\AgdaSpace{}%
\AgdaSymbol{\{\AgdaUnderscore{}\}}\AgdaSpace{}%
\AgdaSymbol{\{}\AgdaBound{y}\AgdaSymbol{\}}\AgdaSpace{}%
\AgdaBound{e}\AgdaSpace{}%
\AgdaBound{e′}\AgdaSpace{}%
\AgdaSymbol{=}\AgdaSpace{}%
\AgdaPostulate{tpt}\AgdaSpace{}%
\AgdaSymbol{(λ}\AgdaSpace{}%
\AgdaBound{x′}\AgdaSpace{}%
\AgdaBound{y′}\AgdaSpace{}%
\AgdaSymbol{→}\AgdaSpace{}%
\AgdaBound{f}\AgdaSpace{}%
\AgdaBound{x}\AgdaSpace{}%
\AgdaBound{y}\AgdaSpace{}%
\AgdaOperator{\AgdaPostulate{≡≡}}\AgdaSpace{}%
\AgdaBound{f}\AgdaSpace{}%
\AgdaBound{x′}\AgdaSpace{}%
\AgdaBound{y′}\AgdaSymbol{)}\AgdaSpace{}%
\AgdaBound{e}\AgdaSpace{}%
\AgdaBound{e′}\AgdaSpace{}%
\AgdaSymbol{(}\AgdaPostulate{rfl}\AgdaSpace{}%
\AgdaSymbol{(}\AgdaBound{f}\AgdaSpace{}%
\AgdaBound{x}\AgdaSpace{}%
\AgdaBound{y}\AgdaSymbol{))}\<%
\\
\\[\AgdaEmptyExtraSkip]%
\>[2]\AgdaFunction{cong₃}\AgdaSpace{}%
\AgdaSymbol{:}%
\>[480I]\AgdaSymbol{∀\{}\AgdaBound{k}\AgdaSpace{}%
\AgdaBound{l}\AgdaSpace{}%
\AgdaBound{m}\AgdaSpace{}%
\AgdaBound{n}\AgdaSymbol{\}\{}\AgdaBound{A}\AgdaSpace{}%
\AgdaSymbol{:}\AgdaSpace{}%
\AgdaPrimitiveType{Set}\AgdaSpace{}%
\AgdaBound{k}\AgdaSymbol{\}\{}\AgdaBound{B}\AgdaSpace{}%
\AgdaSymbol{:}\AgdaSpace{}%
\AgdaBound{A}\AgdaSpace{}%
\AgdaSymbol{→}\AgdaSpace{}%
\AgdaPrimitiveType{Set}\AgdaSpace{}%
\AgdaBound{l}\AgdaSymbol{\}\{}\AgdaBound{C}\AgdaSpace{}%
\AgdaSymbol{:}\AgdaSpace{}%
\AgdaSymbol{(}\AgdaBound{x}\AgdaSpace{}%
\AgdaSymbol{:}\AgdaSpace{}%
\AgdaBound{A}\AgdaSymbol{)}\AgdaSpace{}%
\AgdaSymbol{→}\AgdaSpace{}%
\AgdaBound{B}\AgdaSpace{}%
\AgdaBound{x}\AgdaSpace{}%
\AgdaSymbol{→}\AgdaSpace{}%
\AgdaPrimitiveType{Set}\AgdaSpace{}%
\AgdaBound{m}\AgdaSymbol{\}}\<%
\\
\>[.][@{}l@{}]\<[480I]%
\>[10]\AgdaSymbol{\{}\AgdaBound{D}\AgdaSpace{}%
\AgdaSymbol{:}\AgdaSpace{}%
\AgdaSymbol{(}\AgdaBound{x}\AgdaSpace{}%
\AgdaSymbol{:}\AgdaSpace{}%
\AgdaBound{A}\AgdaSymbol{)(}\AgdaBound{y}\AgdaSpace{}%
\AgdaSymbol{:}\AgdaSpace{}%
\AgdaBound{B}\AgdaSpace{}%
\AgdaBound{x}\AgdaSymbol{)}\AgdaSpace{}%
\AgdaSymbol{→}\AgdaSpace{}%
\AgdaBound{C}\AgdaSpace{}%
\AgdaBound{x}\AgdaSpace{}%
\AgdaBound{y}\AgdaSpace{}%
\AgdaSymbol{→}\AgdaSpace{}%
\AgdaPrimitiveType{Set}\AgdaSpace{}%
\AgdaBound{n}\AgdaSymbol{\}}\<%
\\
\>[10]\AgdaSymbol{(}\AgdaBound{f}\AgdaSpace{}%
\AgdaSymbol{:}\AgdaSpace{}%
\AgdaSymbol{(}\AgdaBound{x}\AgdaSpace{}%
\AgdaSymbol{:}\AgdaSpace{}%
\AgdaBound{A}\AgdaSymbol{)(}\AgdaBound{y}\AgdaSpace{}%
\AgdaSymbol{:}\AgdaSpace{}%
\AgdaBound{B}\AgdaSpace{}%
\AgdaBound{x}\AgdaSymbol{)(}\AgdaBound{z}\AgdaSpace{}%
\AgdaSymbol{:}\AgdaSpace{}%
\AgdaBound{C}\AgdaSpace{}%
\AgdaBound{x}\AgdaSpace{}%
\AgdaBound{y}\AgdaSymbol{)}\AgdaSpace{}%
\AgdaSymbol{→}\AgdaSpace{}%
\AgdaBound{D}\AgdaSpace{}%
\AgdaBound{x}\AgdaSpace{}%
\AgdaBound{y}\AgdaSpace{}%
\AgdaBound{z}\AgdaSymbol{)}\<%
\\
\>[10]\AgdaSymbol{\{}\AgdaBound{x}\AgdaSpace{}%
\AgdaBound{x′}\AgdaSpace{}%
\AgdaSymbol{:}\AgdaSpace{}%
\AgdaBound{A}\AgdaSymbol{\}\{}\AgdaBound{y}\AgdaSpace{}%
\AgdaSymbol{:}\AgdaSpace{}%
\AgdaBound{B}\AgdaSpace{}%
\AgdaBound{x}\AgdaSymbol{\}\{}\AgdaBound{y′}\AgdaSpace{}%
\AgdaSymbol{:}\AgdaSpace{}%
\AgdaBound{B}\AgdaSpace{}%
\AgdaBound{x′}\AgdaSymbol{\}\{}\AgdaBound{z}\AgdaSpace{}%
\AgdaSymbol{:}\AgdaSpace{}%
\AgdaBound{C}\AgdaSpace{}%
\AgdaBound{x}\AgdaSpace{}%
\AgdaBound{y}\AgdaSymbol{\}\{}\AgdaBound{z′}\AgdaSpace{}%
\AgdaSymbol{:}\AgdaSpace{}%
\AgdaBound{C}\AgdaSpace{}%
\AgdaBound{x′}\AgdaSpace{}%
\AgdaBound{y′}\AgdaSymbol{\}}\AgdaSpace{}%
\AgdaSymbol{→}\<%
\\
\>[10]\AgdaBound{x}\AgdaSpace{}%
\AgdaOperator{\AgdaFunction{≡}}\AgdaSpace{}%
\AgdaBound{x′}\AgdaSpace{}%
\AgdaSymbol{→}\AgdaSpace{}%
\AgdaBound{y}\AgdaSpace{}%
\AgdaOperator{\AgdaPostulate{≡≡}}\AgdaSpace{}%
\AgdaBound{y′}\AgdaSpace{}%
\AgdaSymbol{→}\AgdaSpace{}%
\AgdaBound{z}\AgdaSpace{}%
\AgdaOperator{\AgdaPostulate{≡≡}}\AgdaSpace{}%
\AgdaBound{z′}\AgdaSpace{}%
\AgdaSymbol{→}\AgdaSpace{}%
\AgdaBound{f}\AgdaSpace{}%
\AgdaBound{x}\AgdaSpace{}%
\AgdaBound{y}\AgdaSpace{}%
\AgdaBound{z}\AgdaSpace{}%
\AgdaOperator{\AgdaPostulate{≡≡}}\AgdaSpace{}%
\AgdaBound{f}\AgdaSpace{}%
\AgdaBound{x′}\AgdaSpace{}%
\AgdaBound{y′}\AgdaSpace{}%
\AgdaBound{z′}\<%
\\
\>[2]\AgdaFunction{cong₃}\AgdaSpace{}%
\AgdaBound{f}\AgdaSpace{}%
\AgdaSymbol{\{}\AgdaBound{x}\AgdaSymbol{\}}\AgdaSpace{}%
\AgdaSymbol{\{\AgdaUnderscore{}\}}\AgdaSpace{}%
\AgdaSymbol{\{}\AgdaBound{y}\AgdaSymbol{\}}\AgdaSpace{}%
\AgdaSymbol{\{\AgdaUnderscore{}\}}\AgdaSpace{}%
\AgdaSymbol{\{}\AgdaBound{z}\AgdaSymbol{\}}\AgdaSpace{}%
\AgdaBound{e}\AgdaSpace{}%
\AgdaBound{e′}\AgdaSpace{}%
\AgdaSymbol{=}\<%
\\
\>[2][@{}l@{\AgdaIndent{0}}]%
\>[4]\AgdaPostulate{tpt}\AgdaSpace{}%
\AgdaSymbol{(λ}\AgdaSpace{}%
\AgdaBound{x′}\AgdaSpace{}%
\AgdaBound{y′}\AgdaSpace{}%
\AgdaSymbol{→}\AgdaSpace{}%
\AgdaSymbol{∀\{}\AgdaBound{z′}\AgdaSymbol{\}}\AgdaSpace{}%
\AgdaSymbol{→}\AgdaSpace{}%
\AgdaBound{z}\AgdaSpace{}%
\AgdaOperator{\AgdaPostulate{≡≡}}\AgdaSpace{}%
\AgdaBound{z′}\AgdaSpace{}%
\AgdaSymbol{→}\AgdaSpace{}%
\AgdaBound{f}\AgdaSpace{}%
\AgdaBound{x}\AgdaSpace{}%
\AgdaBound{y}\AgdaSpace{}%
\AgdaBound{z}\AgdaSpace{}%
\AgdaOperator{\AgdaPostulate{≡≡}}\AgdaSpace{}%
\AgdaBound{f}\AgdaSpace{}%
\AgdaBound{x′}\AgdaSpace{}%
\AgdaBound{y′}\AgdaSpace{}%
\AgdaBound{z′}\AgdaSymbol{)}\AgdaSpace{}%
\AgdaBound{e}\AgdaSpace{}%
\AgdaBound{e′}\AgdaSpace{}%
\AgdaSymbol{(}\AgdaFunction{cong₂}\AgdaSpace{}%
\AgdaSymbol{(}\AgdaBound{f}\AgdaSpace{}%
\AgdaBound{x}\AgdaSymbol{)}\AgdaSpace{}%
\AgdaSymbol{(}\AgdaPostulate{rfl}\AgdaSpace{}%
\AgdaBound{y}\AgdaSymbol{))}\<%
\end{code}
\caption{Equational reasoning for heterogeneous equality}
\label{fig:equrheq}
\end{figure}

We begin with some simple lemmas establishing the basics of equational
logic for $\heq$, namely chain-reasoning using reflexivity (already an
axiom), symmetry, transitivity and congruence properties. These are
given in Fig.~\ref{fig:equrheq}.

The axioms in Fig.~\ref{fig:axihtpe} are notably lacking a
``regularity'' property for $\afun{tpt}$, that is, a proof of type
$\afun{tpt}\,(\afun{rfl}\,\aarg{x})\,(\afun{rfl}\,\aarg{y})\,\aarg{z}
\eq \aarg{z}$. But such a thing is needed if we are to derive the
expected elimination and (typal) computation rules for $\_{\heq}\_$
and $\_{\eq}\_$. To get those, one can define a ``corrected'' form of
transport that has this regularity property, using a simplified
version of a trick due to Peter Lumsdaine~[unpublished]. In fact, it
is enough to produce a function coercing proofs of equality of types
$\aarg{e}:\aarg{A}\eq\aarg{B}$ into functions
$\afun{coe}\,\aarg{e} : \aarg{A}\fun\aarg{B}$ and which satisfies the
heterogeneous regularity property that
$\afun{coe}\,\aarg{e}\,\aarg{x} \heq \aarg{x}$ (so that, given how we
define $\eq$ in terms of $\heq$, the usual form of regularity,
$\afun{coe}\,(\afun{rfl}\,\aarg{A})\,\aarg{x} \eq \aarg{x}$, is just
the special case of this when $\aarg{e}$ is $\afun{rfl}\,\aarg{A}$).

\begin{lemma}
  \label{lem:verlt}
  The axioms in Fig.~\ref{fig:axihtpe} imply the existence of a coercion
  function
\AgdaTarget{coe}
{\normalfont\begin{code}%
\>[2]\AgdaFunction{coe}\AgdaSpace{}%
\AgdaSymbol{:}\AgdaSpace{}%
\AgdaSymbol{∀\{}\AgdaBound{l}\AgdaSymbol{\}\{}\AgdaBound{A}\AgdaSpace{}%
\AgdaBound{B}\AgdaSpace{}%
\AgdaSymbol{:}\AgdaSpace{}%
\AgdaPrimitiveType{Set}\AgdaSpace{}%
\AgdaBound{l}\AgdaSymbol{\}}\AgdaSpace{}%
\AgdaSymbol{→}\AgdaSpace{}%
\AgdaBound{A}\AgdaSpace{}%
\AgdaOperator{\AgdaFunction{≡}}\AgdaSpace{}%
\AgdaBound{B}\AgdaSpace{}%
\AgdaSymbol{→}\AgdaSpace{}%
\AgdaBound{A}\AgdaSpace{}%
\AgdaSymbol{→}\AgdaSpace{}%
\AgdaBound{B}\<%
\end{code}}
  satisfying a heterogeneous regularity property:
\AgdaTarget{coeIsRegular}
{\normalfont\begin{code}%
\>[2]\AgdaFunction{coeIsRegular}\AgdaSpace{}%
\AgdaSymbol{:}\AgdaSpace{}%
\AgdaSymbol{∀\{}\AgdaBound{l}\AgdaSymbol{\}\{}\AgdaBound{A}\AgdaSpace{}%
\AgdaBound{B}\AgdaSpace{}%
\AgdaSymbol{:}\AgdaSpace{}%
\AgdaPrimitiveType{Set}\AgdaSpace{}%
\AgdaBound{l}\AgdaSymbol{\}(}\AgdaBound{e}\AgdaSpace{}%
\AgdaSymbol{:}\AgdaSpace{}%
\AgdaBound{A}\AgdaSpace{}%
\AgdaOperator{\AgdaFunction{≡}}\AgdaSpace{}%
\AgdaBound{B}\AgdaSymbol{)(}\AgdaBound{x}\AgdaSpace{}%
\AgdaSymbol{:}\AgdaSpace{}%
\AgdaBound{A}\AgdaSymbol{)}\AgdaSpace{}%
\AgdaSymbol{→}\AgdaSpace{}%
\AgdaFunction{coe}\AgdaSpace{}%
\AgdaBound{e}\AgdaSpace{}%
\AgdaBound{x}\AgdaSpace{}%
\AgdaOperator{\AgdaPostulate{≡≡}}\AgdaSpace{}%
\AgdaBound{x}\<%
\end{code}}
\end{lemma}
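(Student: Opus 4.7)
My plan is to construct $\afun{coe}$ and its regularity proof simultaneously via a single application of $\afun{tpt}$, taking the motive to be a $\Sigma$-type that packages a would-be coercion value together with a heterogeneous equality back to $\aarg{x}$. The enabling observation is that $\afun{ctr}\,\aarg{e}:\afun{rfl}\,\aarg{A}\heq\aarg{e}$ acts as a form of path induction on heterogeneous equalities: anything established at $\afun{rfl}\,\aarg{A}$ can be transported by $\afun{tpt}$ to an analogous statement about an arbitrary $\aarg{e}:\aarg{A}\eq\aarg{B}$.

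Fix $\aarg{A},\aarg{B}:\atyp{Set}\,\aarg{l}$, $\aarg{e}:\aarg{A}\eq\aarg{B}$ and $\aarg{x}:\aarg{A}$, and consider the family
\[
  \aarg{C} \,:\, (\aarg{X}:\atyp{Set}\,\aarg{l}) \fun (\aarg{A}\heq\aarg{X}) \fun \atyp{Set}\,\aarg{l},
  \qquad
  \aarg{C}\,\aarg{X}\,\aarg{p} \,:=\, \afun{∑}\,\aarg{X}\,(\lambda\,\aarg{y} \fun \aarg{y}\heq\aarg{x}).
\]
The pair $(\aarg{x}\,\afun{,}\,\afun{rfl}\,\aarg{x})$ inhabits $\aarg{C}\,\aarg{A}\,(\afun{rfl}\,\aarg{A})$. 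Viewing $\afun{rfl}\,\aarg{A}$ and $\aarg{e}$ as elements of the fibres of $\lambda\,\aarg{X} \fun \aarg{A}\heq\aarg{X}$ at $\aarg{A}$ and $\aarg{B}$, the heterogeneous equality $\afun{ctr}\,\aarg{e}$ has exactly the shape that $\afun{tpt}$ requires, and $\aarg{e}$ itself supplies the matching homogeneous equality on the base. So
\[
  \aarg{w} \,:=\, \afun{tpt}\,\aarg{C}\,\aarg{e}\,(\afun{ctr}\,\aarg{e})\,(\aarg{x}\,\afun{,}\,\afun{rfl}\,\aarg{x})
\]
is well-typed and lands in $\aarg{C}\,\aarg{B}\,\aarg{e} = \afun{∑}\,\aarg{B}\,(\lambda\,\aarg{y} \fun \aarg{y}\heq\aarg{x})$. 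I then set $\afun{coe}\,\aarg{e}\,\aarg{x} := \afun{fst}\,\aarg{w}$ and $\afun{coeIsRegular}\,\aarg{e}\,\aarg{x} := \afun{snd}\,\aarg{w}$; the second projection has type $\afun{fst}\,\aarg{w}\heq\aarg{x}$ by the definition of the $\Sigma$-fibre, and this unfolds to $\afun{coe}\,\aarg{e}\,\aarg{x}\heq\aarg{x}$ as required.

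The main hurdle is spotting this packaged motive; once it is in place, no equational calculation is needed, because the regularity proof is simply the second component of the very pair whose first component defines $\afun{coe}$. A secondary technical point is lining up the implicit arguments of $\afun{tpt}$: its index type here is the universe $\atyp{Set}\,\aarg{l}$ (one level up) and its implicit fibre family has to be inferred as $\lambda\,\aarg{X} \fun \aarg{A}\heq\aarg{X}$, so that $\afun{rfl}\,\aarg{A}$ and $\aarg{e}$ sit in the adjacent fibres required for $\afun{ctr}\,\aarg{e}$ to act as $\afun{tpt}$'s heterogeneous-equality argument.
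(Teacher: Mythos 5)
Your construction is correct, and it takes a genuinely different and more direct route than the paper's. You transport the single pair $(\aarg{x}\,,\,\afun{rfl}\,\aarg{x})$ in the family $\aarg{C}\,\aarg{X}\,\aarg{p}=\afun{∑}\,\aarg{X}\,(\lambda\,\aarg{y}\fun \aarg{y}\heq\aarg{x})$ along $\aarg{e}$ and $\afun{ctr}\,\aarg{e}$; the instance of $\afun{tpt}$ type-checks (base $\atyp{Set}\,\aarg{l}$, fibration $\lambda\,\aarg{X}\fun \aarg{A}\heq\aarg{X}$, and the fibre family is allowed to ignore its second argument), and then $\afun{coe}\,\aarg{e}\,\aarg{x}$ and $\afun{coeIsRegular}\,\aarg{e}\,\aarg{x}$ are literally the two projections of the resulting pair, with no further reasoning. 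The paper instead runs a simplified form of Lumsdaine's trick through the auxiliary type $\afun{Inj}\,\aarg{A}\,\aarg{B}$ of $\eq$-injective functions: it coerces $\aarg{e}$ to an injective function $\afun{icoe}\,\aarg{e}$, transports a pair in the family $\afun{fsticoe}\,\aarg{x}\,\aarg{B}\,\aarg{e}$ whose second component is the \emph{homogeneous} equation $\afun{fst}\,(\afun{icoe}\,(\afun{rfl}\,\aarg{B}))\,\aarg{y} \eq \afun{fst}\,(\afun{icoe}\,\aarg{e})\,\aarg{x}$ in $\aarg{B}$, and must then combine a second application of $\afun{tpt}$ with the injectivity of $\afun{icoe}\,(\afun{rfl}\,\aarg{A})$ to extract regularity. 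What your version buys is brevity: the regularity witness is the second component of the very pair that defines the coercion. What the paper's version buys is uniformity with the setting where heterogeneous equality is unavailable: your motive depends on being able to write $\aarg{y}\heq\aarg{x}$ for $\aarg{y}$ and $\aarg{x}$ in different types, which is precisely what fails in the homogeneous, K-free situation of the Appendix, where one is forced to compare images under a map into a common type --- that is, to use exactly the injectivity machinery your argument avoids.
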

\begin{proof}
First we define the type of
functions that are injective with respect to $\eq$ and note that the
identity function is one such:
\AgdaTarget{Inj}
\AgdaTarget{id}
\AgdaTarget{idInj}
\begin{code}%
\>[2]\AgdaFunction{Inj}%
\>[12]\AgdaSymbol{:}\AgdaSpace{}%
\AgdaSymbol{∀\{}\AgdaBound{l}\AgdaSymbol{\}(}\AgdaBound{A}\AgdaSpace{}%
\AgdaBound{B}\AgdaSpace{}%
\AgdaSymbol{:}\AgdaSpace{}%
\AgdaPrimitiveType{Set}\AgdaSpace{}%
\AgdaBound{l}\AgdaSymbol{)}\AgdaSpace{}%
\AgdaSymbol{→}\AgdaSpace{}%
\AgdaPrimitiveType{Set}\AgdaSpace{}%
\AgdaBound{l}\<%
\\
\>[2]\AgdaFunction{Inj}\AgdaSpace{}%
\AgdaBound{A}\AgdaSpace{}%
\AgdaBound{B}%
\>[12]\AgdaSymbol{=}\AgdaSpace{}%
\AgdaPostulate{∑}\AgdaSpace{}%
\AgdaBound{f}\AgdaSpace{}%
\AgdaPostulate{∶}\AgdaSpace{}%
\AgdaSymbol{(}\AgdaBound{A}\AgdaSpace{}%
\AgdaSymbol{→}\AgdaSpace{}%
\AgdaBound{B}\AgdaSymbol{)}\AgdaSpace{}%
\AgdaPostulate{,}\AgdaSpace{}%
\AgdaSymbol{∀\{}\AgdaBound{x}\AgdaSpace{}%
\AgdaBound{y}\AgdaSymbol{\}}\AgdaSpace{}%
\AgdaSymbol{→}\AgdaSpace{}%
\AgdaBound{f}\AgdaSpace{}%
\AgdaBound{x}\AgdaSpace{}%
\AgdaOperator{\AgdaFunction{≡}}\AgdaSpace{}%
\AgdaBound{f}\AgdaSpace{}%
\AgdaBound{y}\AgdaSpace{}%
\AgdaSymbol{→}\AgdaSpace{}%
\AgdaBound{x}\AgdaSpace{}%
\AgdaOperator{\AgdaFunction{≡}}\AgdaSpace{}%
\AgdaBound{y}\<%
\\
\\[\AgdaEmptyExtraSkip]%
\>[2]\AgdaFunction{id}%
\>[12]\AgdaSymbol{:}\AgdaSpace{}%
\AgdaSymbol{∀\{}\AgdaBound{l}\AgdaSymbol{\}\{}\AgdaBound{A}\AgdaSpace{}%
\AgdaSymbol{:}\AgdaSpace{}%
\AgdaPrimitiveType{Set}\AgdaSpace{}%
\AgdaBound{l}\AgdaSymbol{\}}\AgdaSpace{}%
\AgdaSymbol{→}\AgdaSpace{}%
\AgdaBound{A}\AgdaSpace{}%
\AgdaSymbol{→}\AgdaSpace{}%
\AgdaBound{A}\<%
\\
\>[2]\AgdaFunction{id}\AgdaSpace{}%
\AgdaBound{x}%
\>[12]\AgdaSymbol{=}\AgdaSpace{}%
\AgdaBound{x}\<%
\\
\>[0]\<%
\\
\>[2]\AgdaFunction{idInj}%
\>[12]\AgdaSymbol{:}\AgdaSpace{}%
\AgdaSymbol{∀\{}\AgdaBound{l}\AgdaSymbol{\}(}\AgdaBound{A}\AgdaSpace{}%
\AgdaSymbol{:}\AgdaSpace{}%
\AgdaPrimitiveType{Set}\AgdaSpace{}%
\AgdaBound{l}\AgdaSymbol{)}\AgdaSpace{}%
\AgdaSymbol{→}\AgdaSpace{}%
\AgdaFunction{Inj}\AgdaSpace{}%
\AgdaBound{A}\AgdaSpace{}%
\AgdaBound{A}\<%
\\
\>[2]\AgdaFunction{idInj}\AgdaSpace{}%
\AgdaSymbol{\AgdaUnderscore{}}%
\>[12]\AgdaSymbol{=}\AgdaSpace{}%
\AgdaSymbol{(}\AgdaFunction{id}\AgdaSpace{}%
\AgdaOperator{\AgdaPostulate{,}}\AgdaSpace{}%
\AgdaFunction{id}\AgdaSymbol{)}\<%
\end{code}
Next we use $\afun{tpt}$ to define a function coercing equalities into
injective functions:
\AgdaTarget{icoe}
\begin{code}%
\>[2]\AgdaFunction{icoe}\AgdaSpace{}%
\AgdaSymbol{:}\AgdaSpace{}%
\AgdaSymbol{∀\{}\AgdaBound{l}\AgdaSymbol{\}\{}\AgdaBound{A}\AgdaSpace{}%
\AgdaBound{B}\AgdaSpace{}%
\AgdaSymbol{:}\AgdaSpace{}%
\AgdaPrimitiveType{Set}\AgdaSpace{}%
\AgdaBound{l}\AgdaSymbol{\}}\AgdaSpace{}%
\AgdaSymbol{→}\AgdaSpace{}%
\AgdaBound{A}\AgdaSpace{}%
\AgdaOperator{\AgdaFunction{≡}}\AgdaSpace{}%
\AgdaBound{B}\AgdaSpace{}%
\AgdaSymbol{→}\AgdaSpace{}%
\AgdaFunction{Inj}\AgdaSpace{}%
\AgdaBound{A}\AgdaSpace{}%
\AgdaBound{B}\<%
\\
\>[2]\AgdaFunction{icoe}\AgdaSpace{}%
\AgdaSymbol{\{}\AgdaBound{l}\AgdaSymbol{\}}\AgdaSpace{}%
\AgdaSymbol{\{}\AgdaBound{A}\AgdaSymbol{\}}\AgdaSpace{}%
\AgdaBound{e}\AgdaSpace{}%
\AgdaSymbol{=}\AgdaSpace{}%
\AgdaPostulate{tpt}\AgdaSpace{}%
\AgdaSymbol{(λ}\AgdaSpace{}%
\AgdaBound{\AgdaUnderscore{}}\AgdaSpace{}%
\AgdaBound{C}\AgdaSpace{}%
\AgdaSymbol{→}\AgdaSpace{}%
\AgdaFunction{Inj}\AgdaSpace{}%
\AgdaBound{A}\AgdaSpace{}%
\AgdaBound{C}\AgdaSymbol{)}\AgdaSpace{}%
\AgdaSymbol{(}\AgdaPostulate{rfl}\AgdaSpace{}%
\AgdaSymbol{(}\AgdaPrimitiveType{Set}\AgdaSpace{}%
\AgdaBound{l}\AgdaSymbol{))}\AgdaSpace{}%
\AgdaBound{e}\AgdaSpace{}%
\AgdaSymbol{(}\AgdaFunction{idInj}\AgdaSpace{}%
\AgdaBound{A}\AgdaSymbol{)}\<%
\end{code}
The injectiveness of $\afun{icoe}\,\aarg{e}$ is used as follows. Applying the operation
$\afun{tpt}$ to the type family
\AgdaTarget{fsticoe}
\begin{code}%
\>[2]\AgdaFunction{fsticoe}\AgdaSpace{}%
\AgdaSymbol{:}\AgdaSpace{}%
\AgdaSymbol{∀\{}\AgdaBound{l}\AgdaSymbol{\}\{}\AgdaBound{A}\AgdaSpace{}%
\AgdaSymbol{:}\AgdaSpace{}%
\AgdaPrimitiveType{Set}\AgdaSpace{}%
\AgdaBound{l}\AgdaSymbol{\}(}\AgdaBound{x}\AgdaSpace{}%
\AgdaSymbol{:}\AgdaSpace{}%
\AgdaBound{A}\AgdaSymbol{)(}\AgdaBound{B}\AgdaSpace{}%
\AgdaSymbol{:}\AgdaSpace{}%
\AgdaPrimitiveType{Set}\AgdaSpace{}%
\AgdaBound{l}\AgdaSymbol{)(}\AgdaBound{e}\AgdaSpace{}%
\AgdaSymbol{:}\AgdaSpace{}%
\AgdaBound{A}\AgdaSpace{}%
\AgdaOperator{\AgdaFunction{≡}}\AgdaSpace{}%
\AgdaBound{B}\AgdaSymbol{)}\AgdaSpace{}%
\AgdaSymbol{→}\AgdaSpace{}%
\AgdaPrimitiveType{Set}\AgdaSpace{}%
\AgdaBound{l}\<%
\\
\>[2]\AgdaFunction{fsticoe}\AgdaSpace{}%
\AgdaBound{x}\AgdaSpace{}%
\AgdaBound{B}\AgdaSpace{}%
\AgdaBound{e}\AgdaSpace{}%
\AgdaSymbol{=}\AgdaSpace{}%
\AgdaPostulate{∑}\AgdaSpace{}%
\AgdaBound{y}\AgdaSpace{}%
\AgdaPostulate{∶}\AgdaSpace{}%
\AgdaBound{B}%
\>[27]\AgdaPostulate{,}\AgdaSpace{}%
\AgdaSymbol{(}\AgdaPostulate{fst}\AgdaSpace{}%
\AgdaSymbol{(}\AgdaFunction{icoe}\AgdaSpace{}%
\AgdaSymbol{(}\AgdaPostulate{rfl}\AgdaSpace{}%
\AgdaBound{B}\AgdaSymbol{))}\AgdaSpace{}%
\AgdaBound{y}\AgdaSpace{}%
\AgdaOperator{\AgdaFunction{≡}}\AgdaSpace{}%
\AgdaPostulate{fst}\AgdaSpace{}%
\AgdaSymbol{(}\AgdaFunction{icoe}\AgdaSpace{}%
\AgdaBound{e}\AgdaSymbol{)}\AgdaSpace{}%
\AgdaBound{x}\AgdaSymbol{)}\<%
\end{code}
we can transport the element
$(\aarg{e}\,,\,\afun{rfl}\, (\afun{fst}\,
(\afun{icoe}\,(\afun{rfl}\,\aarg{A}))\, \aarg{x})$ of type
$\afun{fsticoe}\,\aarg{x}\,\aarg{A}\,(\afun{rfl}\,\aarg{A})$ along
$\aarg{e}:\aarg{A}\eq\aarg{B}$ and
$\afun{ctr}\,\aarg{e}:\afun{rfl}\,\aarg{A}\heq \aarg{e}$ to give an
element of type $\afun{fsticoe}\,\aarg{x}\,\aarg{B}\,\aarg{e}$.  The
first projection of this element gives the value of the desired
coercion along $\aarg{e}$ at $\aarg{x}\,$:
\begin{code}%
\>[2]\AgdaFunction{coe}\AgdaSpace{}%
\AgdaBound{e}\AgdaSpace{}%
\AgdaBound{x}\AgdaSpace{}%
\AgdaSymbol{=}\AgdaSpace{}%
\AgdaPostulate{fst}\AgdaSpace{}%
\AgdaSymbol{(}\AgdaPostulate{tpt}\AgdaSpace{}%
\AgdaSymbol{(}\AgdaFunction{fsticoe}\AgdaSpace{}%
\AgdaBound{x}\AgdaSymbol{)}\AgdaSpace{}%
\AgdaBound{e}\AgdaSpace{}%
\AgdaSymbol{(}\AgdaPostulate{ctr}\AgdaSpace{}%
\AgdaBound{e}\AgdaSymbol{)}\AgdaSpace{}%
\AgdaSymbol{(}\AgdaBound{x}\AgdaSpace{}%
\AgdaOperator{\AgdaPostulate{,}}\AgdaSpace{}%
\AgdaPostulate{rfl}\AgdaSpace{}%
\AgdaSymbol{\AgdaUnderscore{}))}\<%
\end{code}
and its second projection can
be used along with the injectiveness property of $\afun{icoe}$ to get
the regularity property of this coercion:
\begin{code}%
\>[2]\AgdaFunction{coeIsRegular}\AgdaSpace{}%
\AgdaSymbol{\{\AgdaUnderscore{}\}}\AgdaSpace{}%
\AgdaSymbol{\{}\AgdaBound{A}\AgdaSymbol{\}}\AgdaSpace{}%
\AgdaBound{e}\AgdaSpace{}%
\AgdaBound{x}\AgdaSpace{}%
\AgdaSymbol{=}\AgdaSpace{}%
\AgdaPostulate{tpt}\AgdaSpace{}%
\AgdaSymbol{(λ}\AgdaSpace{}%
\AgdaBound{\AgdaUnderscore{}}\AgdaSpace{}%
\AgdaBound{e′}\AgdaSpace{}%
\AgdaSymbol{→}\AgdaSpace{}%
\AgdaFunction{coe}\AgdaSpace{}%
\AgdaBound{e′}\AgdaSpace{}%
\AgdaBound{x}\AgdaSpace{}%
\AgdaOperator{\AgdaPostulate{≡≡}}\AgdaSpace{}%
\AgdaBound{x}\AgdaSymbol{)}\AgdaSpace{}%
\AgdaBound{e}\AgdaSpace{}%
\AgdaSymbol{(}\AgdaPostulate{ctr}%
\>[66]\AgdaBound{e}\AgdaSymbol{)}\AgdaSpace{}%
\AgdaFunction{coerfl}\<%
\\
\>[2][@{}l@{\AgdaIndent{0}}]%
\>[4]\AgdaKeyword{where}\<%
\\
\>[4]\AgdaFunction{coerfl}\AgdaSpace{}%
\AgdaSymbol{:}\AgdaSpace{}%
\AgdaFunction{coe}\AgdaSpace{}%
\AgdaSymbol{(}\AgdaPostulate{rfl}\AgdaSpace{}%
\AgdaBound{A}\AgdaSymbol{)}\AgdaSpace{}%
\AgdaBound{x}\AgdaSpace{}%
\AgdaOperator{\AgdaFunction{≡}}\AgdaSpace{}%
\AgdaBound{x}\<%
\\
\>[4]\AgdaFunction{coerfl}\AgdaSpace{}%
\AgdaSymbol{=}%
\>[795I]\AgdaPostulate{snd}\AgdaSpace{}%
\AgdaSymbol{(}\AgdaFunction{icoe}\AgdaSpace{}%
\AgdaSymbol{(}\AgdaPostulate{rfl}\AgdaSpace{}%
\AgdaBound{A}\AgdaSymbol{))}\AgdaSpace{}%
\AgdaSymbol{(}\AgdaPostulate{snd}\AgdaSpace{}%
\AgdaSymbol{(}\<%
\\
\>[795I][@{}l@{\AgdaIndent{0}}]%
\>[15]\AgdaPostulate{tpt}\AgdaSpace{}%
\AgdaSymbol{(}\AgdaFunction{fsticoe}\AgdaSpace{}%
\AgdaBound{x}\AgdaSymbol{)}\AgdaSpace{}%
\AgdaSymbol{(}\AgdaPostulate{rfl}\AgdaSpace{}%
\AgdaBound{A}\AgdaSymbol{)}\AgdaSpace{}%
\AgdaSymbol{(}\AgdaPostulate{ctr}\AgdaSpace{}%
\AgdaSymbol{(}\AgdaPostulate{rfl}\AgdaSpace{}%
\AgdaBound{A}\AgdaSymbol{))}\AgdaSpace{}%
\AgdaSymbol{(}\AgdaBound{x}\AgdaSpace{}%
\AgdaOperator{\AgdaPostulate{,}}\AgdaSpace{}%
\AgdaPostulate{rfl}\AgdaSpace{}%
\AgdaSymbol{\AgdaUnderscore{})))}\<%
\end{code}
\end{proof}

An immediate corollary is that the axioms imply the \emph{uniqueness
  of identity proofs} (UIP) and hence Streicher's
Axiom~K~\citep{SteicherT:invitt}. (We will see in Sect.~\ref{sec:cona}
that in fact it is only the $\afun{tpt}$ function that contains an
implicit use of Axiom~K.)

\begin{theorem}[UIP and Axiom K]
  \label{thm:uipak}
  The axioms in Fig.~\ref{fig:axihtpe} imply that $\eq$ satisfies
  {\normalfont
\AgdaTarget{uip}
\AgdaTarget{axiomK}
\AgdaTarget{axiomKComp}
\begin{code}%
\>[2]\AgdaFunction{uip}%
\>[14]\AgdaSymbol{:}\AgdaSpace{}%
\AgdaSymbol{∀\{}\AgdaBound{l}\AgdaSymbol{\}\{}\AgdaBound{A}\AgdaSpace{}%
\AgdaSymbol{:}\AgdaSpace{}%
\AgdaPrimitiveType{Set}\AgdaSpace{}%
\AgdaBound{l}\AgdaSymbol{\}\{}\AgdaBound{x}\AgdaSpace{}%
\AgdaBound{y}\AgdaSpace{}%
\AgdaSymbol{:}\AgdaSpace{}%
\AgdaBound{A}\AgdaSymbol{\}(}\AgdaBound{e}\AgdaSpace{}%
\AgdaBound{e′}\AgdaSpace{}%
\AgdaSymbol{:}\AgdaSpace{}%
\AgdaBound{x}\AgdaSpace{}%
\AgdaOperator{\AgdaFunction{≡}}\AgdaSpace{}%
\AgdaBound{y}\AgdaSymbol{)}\AgdaSpace{}%
\AgdaSymbol{→}\AgdaSpace{}%
\AgdaBound{e}\AgdaSpace{}%
\AgdaOperator{\AgdaFunction{≡}}\AgdaSpace{}%
\AgdaBound{e′}\<%
\\
\>[0]\<%
\\
\>[2]\AgdaFunction{axiomK}%
\>[14]\AgdaSymbol{:}%
\>[828I]\AgdaSymbol{∀\{}\AgdaBound{l}\AgdaSpace{}%
\AgdaBound{m}\AgdaSymbol{\}\{}\AgdaBound{A}\AgdaSpace{}%
\AgdaSymbol{:}\AgdaSpace{}%
\AgdaPrimitiveType{Set}\AgdaSpace{}%
\AgdaBound{l}\AgdaSymbol{\}\{}\AgdaBound{x}\AgdaSpace{}%
\AgdaSymbol{:}\AgdaSpace{}%
\AgdaBound{A}\AgdaSymbol{\}(}\AgdaBound{P}\AgdaSpace{}%
\AgdaSymbol{:}\AgdaSpace{}%
\AgdaBound{x}\AgdaSpace{}%
\AgdaOperator{\AgdaFunction{≡}}\AgdaSpace{}%
\AgdaBound{x}\AgdaSpace{}%
\AgdaSymbol{→}\AgdaSpace{}%
\AgdaPrimitiveType{Set}\AgdaSpace{}%
\AgdaBound{m}\AgdaSymbol{)(}\AgdaBound{p}\AgdaSpace{}%
\AgdaSymbol{:}\AgdaSpace{}%
\AgdaBound{P}\AgdaSpace{}%
\AgdaSymbol{(}\AgdaPostulate{rfl}\AgdaSpace{}%
\AgdaBound{x}\AgdaSymbol{))}\AgdaSpace{}%
\AgdaSymbol{→}\<%
\\
\>[.][@{}l@{}]\<[828I]%
\>[16]\AgdaSymbol{∀}\AgdaSpace{}%
\AgdaBound{e}\AgdaSpace{}%
\AgdaSymbol{→}\AgdaSpace{}%
\AgdaBound{P}\AgdaSpace{}%
\AgdaBound{e}\<%
\\
\\[\AgdaEmptyExtraSkip]%
\>[2]\AgdaFunction{axiomKComp}%
\>[14]\AgdaSymbol{:}%
\>[851I]\AgdaSymbol{∀\{}\AgdaBound{l}\AgdaSpace{}%
\AgdaBound{m}\AgdaSymbol{\}\{}\AgdaBound{A}\AgdaSpace{}%
\AgdaSymbol{:}\AgdaSpace{}%
\AgdaPrimitiveType{Set}\AgdaSpace{}%
\AgdaBound{l}\AgdaSymbol{\}\{}\AgdaBound{x}\AgdaSpace{}%
\AgdaSymbol{:}\AgdaSpace{}%
\AgdaBound{A}\AgdaSymbol{\}(}\AgdaBound{P}\AgdaSpace{}%
\AgdaSymbol{:}\AgdaSpace{}%
\AgdaBound{x}\AgdaSpace{}%
\AgdaOperator{\AgdaFunction{≡}}\AgdaSpace{}%
\AgdaBound{x}\AgdaSpace{}%
\AgdaSymbol{→}\AgdaSpace{}%
\AgdaPrimitiveType{Set}\AgdaSpace{}%
\AgdaBound{m}\AgdaSymbol{)(}\AgdaBound{p}\AgdaSpace{}%
\AgdaSymbol{:}\AgdaSpace{}%
\AgdaBound{P}\AgdaSpace{}%
\AgdaSymbol{(}\AgdaPostulate{rfl}\AgdaSpace{}%
\AgdaBound{x}\AgdaSymbol{))}\AgdaSpace{}%
\AgdaSymbol{→}\<%
\\
\>[.][@{}l@{}]\<[851I]%
\>[16]\AgdaFunction{axiomK}\AgdaSpace{}%
\AgdaBound{P}\AgdaSpace{}%
\AgdaBound{p}\AgdaSpace{}%
\AgdaSymbol{(}\AgdaPostulate{rfl}\AgdaSpace{}%
\AgdaBound{x}\AgdaSymbol{)}\AgdaSpace{}%
\AgdaOperator{\AgdaFunction{≡}}\AgdaSpace{}%
\AgdaBound{p}\<%
\end{code}}
\end{theorem}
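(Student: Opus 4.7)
The plan is to define $\afun{axiomK}$ using the coercion function $\afun{coe}$ from Lemma~\ref{lem:verlt}, so that $\afun{axiomKComp}$ falls out immediately from $\afun{coeIsRegular}$. UIP, by contrast, can be obtained cheaply from the $\afun{ctr}$ axiom together with the symmetry and transitivity operators of Figure~\ref{fig:equrheq}, without needing $\afun{coe}$ at all.

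For $\afun{uip}$, given $e, e' : x \eq y$, I will apply $\afun{ctr}$ to each, producing $\afun{ctr}\,e : \afun{rfl}\,x \heq e$ and $\afun{ctr}\,e' : \afun{rfl}\,x \heq e'$. Then $\afun{symm}\,(\afun{ctr}\,e) : e \heq \afun{rfl}\,x$ composed with $\afun{ctr}\,e'$ via the transitivity operator yields $e \heq e'$, which, since both sides inhabit the single type $x \eq y$, coincides definitionally with $e \eq e'$.

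For $\afun{axiomK}$, the same $\afun{ctr}\,e$ applied to $e : x \eq x$ is a heterogeneous equality between two elements of the single type $x \eq x$, and so equally serves as a proof $\afun{rfl}\,x \eq e$. Applying $\afun{cong}$ to $P$ produces $\afun{cong}\,P\,(\afun{ctr}\,e) : P\,(\afun{rfl}\,x) \eq P\,e$ (again collapsing $\heq$ to $\eq$ since both sides live in $\atyp{Set}\,m$), and I define
\[
  \afun{axiomK}\,P\,p\,e \;:=\; \afun{coe}\,(\afun{cong}\,P\,(\afun{ctr}\,e))\,p.
\]
Specialising to $e := \afun{rfl}\,x$, the term $\afun{coeIsRegular}\,(\afun{cong}\,P\,(\afun{ctr}\,(\afun{rfl}\,x)))\,p$ provides a heterogeneous equality between $\afun{axiomK}\,P\,p\,(\afun{rfl}\,x)$ and $p$ that collapses to the desired homogeneous equality $\afun{axiomKComp}$ because both sides lie in $P\,(\afun{rfl}\,x)$. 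The one point requiring care is to route the construction through $\afun{coe}$ rather than through $\afun{tpt}$ directly: since $\afun{tpt}$ has no regularity property of its own, a definition based on $\afun{tpt}$ alone would satisfy $\afun{axiomK}$ but leave $\afun{axiomKComp}$ unprovable.
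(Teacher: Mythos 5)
Your proposal is correct, and its overall strategy is the one the paper uses: \afun{uip} comes from transporting one application of \afun{ctr} along another, \afun{axiomK} is $\afun{coe}$ applied to a congruence of $\aarg{P}$ over $\afun{ctr}\,\aarg{e}$ (the paper writes this with $\afun{cong₂}\,(\lambda\,\_\to \aarg{P})\,(\afun{rfl}\,\aarg{p})$ where you use plain $\afun{cong}\,\aarg{P}$, an immaterial difference), and \afun{axiomKComp} ultimately rests on \afun{coeIsRegular}. The one genuine divergence is in \afun{axiomKComp}: the paper first rewrites the equality proof $\afun{cong₂}\,(\lambda\,\_\to\aarg{P})\,(\afun{rfl}\,\aarg{p})\,(\afun{ctr}\,(\afun{rfl}\,\aarg{x}))$ to $\afun{rfl}$ using \afun{uip} and $\afun{cong}\,(\lambda\,\aarg{e}\to\afun{coe}\,\aarg{e}\,\aarg{p})$, and only then invokes \afun{coeIsRegular} at a reflexivity proof, whereas you apply \afun{coeIsRegular} directly to the non-trivial proof. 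Your shortcut is legitimate: \afun{coeIsRegular} is stated for an arbitrary $\aarg{e}:\aarg{A}\eq\aarg{B}$, and since here the domain and codomain are both $\aarg{P}\,(\afun{rfl}\,\aarg{x})$, the heterogeneous equality it returns is definitionally the required homogeneous one. What the paper's longer chain buys is mainly uniformity with the later computation proofs (\afun{≡Comp}, \afun{≡≡Comp}) and a demonstration of the equational-reasoning combinators; what yours buys is a one-line proof that makes visible that the \afun{uip} detour is not actually needed. Your closing remark about routing through \afun{coe} rather than raw \afun{tpt} correctly identifies the role of Lemma~\ref{lem:verlt}.
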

\begin{proof}
Using the functions from Fig~\ref{fig:equrheq} and Lemma~\ref{lem:verlt} we have:
\begin{code}%
\>[2]\AgdaFunction{uip}\AgdaSpace{}%
\AgdaBound{e}\AgdaSpace{}%
\AgdaBound{e′}%
\>[18]\AgdaSymbol{=}\AgdaSpace{}%
\AgdaPostulate{tpt}\AgdaSpace{}%
\AgdaSymbol{(λ}\AgdaSpace{}%
\AgdaBound{\AgdaUnderscore{}}\AgdaSpace{}%
\AgdaBound{e′′}\AgdaSpace{}%
\AgdaSymbol{→}\AgdaSpace{}%
\AgdaBound{e′′}\AgdaSpace{}%
\AgdaOperator{\AgdaPostulate{≡≡}}\AgdaSpace{}%
\AgdaBound{e′}\AgdaSymbol{)}\AgdaSpace{}%
\AgdaBound{e}\AgdaSpace{}%
\AgdaSymbol{(}\AgdaPostulate{ctr}\AgdaSpace{}%
\AgdaBound{e}\AgdaSymbol{)}\AgdaSpace{}%
\AgdaSymbol{(}\AgdaPostulate{ctr}\AgdaSpace{}%
\AgdaBound{e′}\AgdaSymbol{)}\<%
\\
\\[\AgdaEmptyExtraSkip]%
\>[2]\AgdaFunction{axiomK}\AgdaSpace{}%
\AgdaBound{P}\AgdaSpace{}%
\AgdaBound{p}\AgdaSpace{}%
\AgdaBound{e}%
\>[18]\AgdaSymbol{=}\AgdaSpace{}%
\AgdaFunction{coe}\AgdaSpace{}%
\AgdaSymbol{(}\AgdaFunction{cong₂}\AgdaSpace{}%
\AgdaSymbol{(λ}\AgdaSpace{}%
\AgdaBound{\AgdaUnderscore{}}\AgdaSpace{}%
\AgdaSymbol{→}\AgdaSpace{}%
\AgdaBound{P}\AgdaSymbol{)}\AgdaSpace{}%
\AgdaSymbol{(}\AgdaPostulate{rfl}\AgdaSpace{}%
\AgdaBound{p}\AgdaSymbol{)}\AgdaSpace{}%
\AgdaSymbol{(}\AgdaPostulate{ctr}\AgdaSpace{}%
\AgdaBound{e}\AgdaSymbol{))}\AgdaSpace{}%
\AgdaBound{p}\<%
\\
\\[\AgdaEmptyExtraSkip]%
\>[2]\AgdaFunction{axiomKComp}\AgdaSpace{}%
\AgdaBound{P}\AgdaSpace{}%
\AgdaBound{p}%
\>[18]\AgdaSymbol{=}%
\>[21]\AgdaOperator{\AgdaFunction{proof}}\<%
\\
\>[21][@{}l@{\AgdaIndent{0}}]%
\>[23]\AgdaFunction{coe}\AgdaSpace{}%
\AgdaSymbol{(}\AgdaFunction{cong₂}\AgdaSpace{}%
\AgdaSymbol{(λ}\AgdaSpace{}%
\AgdaBound{\AgdaUnderscore{}}\AgdaSpace{}%
\AgdaSymbol{→}\AgdaSpace{}%
\AgdaBound{P}\AgdaSymbol{)}\AgdaSpace{}%
\AgdaSymbol{(}\AgdaPostulate{rfl}\AgdaSpace{}%
\AgdaBound{p}\AgdaSymbol{)}\AgdaSpace{}%
\AgdaSymbol{(}\AgdaPostulate{ctr}\AgdaSpace{}%
\AgdaSymbol{(}\AgdaPostulate{rfl}\AgdaSpace{}%
\AgdaSymbol{\AgdaUnderscore{})))}\AgdaSpace{}%
\AgdaBound{p}\<%
\\
\>[21]\AgdaOperator{\AgdaFunction{≡≡[}}\AgdaSpace{}%
\AgdaFunction{cong}\AgdaSpace{}%
\AgdaSymbol{(λ}\AgdaSpace{}%
\AgdaBound{e}\AgdaSpace{}%
\AgdaSymbol{→}\AgdaSpace{}%
\AgdaFunction{coe}\AgdaSpace{}%
\AgdaBound{e}\AgdaSpace{}%
\AgdaBound{p}\AgdaSymbol{)}\AgdaSpace{}%
\AgdaSymbol{(}\AgdaFunction{uip}\AgdaSpace{}%
\AgdaSymbol{\AgdaUnderscore{}}\AgdaSpace{}%
\AgdaSymbol{\AgdaUnderscore{})}\AgdaSpace{}%
\AgdaOperator{\AgdaFunction{]}}\<%
\\
\>[21][@{}l@{\AgdaIndent{0}}]%
\>[23]\AgdaFunction{coe}\AgdaSpace{}%
\AgdaSymbol{(}\AgdaPostulate{rfl}\AgdaSpace{}%
\AgdaSymbol{\AgdaUnderscore{})}\AgdaSpace{}%
\AgdaBound{p}\<%
\\
\>[21]\AgdaOperator{\AgdaFunction{≡≡[}}\AgdaSpace{}%
\AgdaFunction{coeIsRegular}\AgdaSpace{}%
\AgdaSymbol{\AgdaUnderscore{}}\AgdaSpace{}%
\AgdaBound{p}\AgdaSpace{}%
\AgdaOperator{\AgdaFunction{]}}\<%
\\
\>[21][@{}l@{\AgdaIndent{0}}]%
\>[23]\AgdaBound{p}\<%
\\
\>[21]\AgdaOperator{\AgdaFunction{qed}}\<%
\end{code}
\end{proof}

The elimination and computation properties of $\eq$ and $\heq$ then
follow: 

\begin{theorem}[Elimination and typal computation properties]
  \label{thm:elitcp}
  The axioms in Fig.~\ref{fig:axihtpe} imply that $\eq$ has the usual
  elimination and (typal) computation properties of homogeneous
  equality (in the form suggested by \cite{PaulinMohringC:inddsc})
  {\normalfont
\AgdaTarget{≡Elim}
\AgdaTarget{≡Comp}
\begin{code}%
\>[2]\AgdaFunction{≡Elim}\AgdaSpace{}%
\AgdaSymbol{:}%
\>[937I]\AgdaSymbol{∀\{}\AgdaBound{l}\AgdaSpace{}%
\AgdaBound{m}\AgdaSymbol{\}\{}\AgdaBound{A}\AgdaSpace{}%
\AgdaSymbol{:}\AgdaSpace{}%
\AgdaPrimitiveType{Set}\AgdaSpace{}%
\AgdaBound{l}\AgdaSymbol{\}\{}\AgdaBound{x}\AgdaSpace{}%
\AgdaSymbol{:}\AgdaSpace{}%
\AgdaBound{A}\AgdaSymbol{\}(}\AgdaBound{P}\AgdaSpace{}%
\AgdaSymbol{:}\AgdaSpace{}%
\AgdaSymbol{(}\AgdaBound{y}\AgdaSpace{}%
\AgdaSymbol{:}\AgdaSpace{}%
\AgdaBound{A}\AgdaSymbol{)}\AgdaSpace{}%
\AgdaSymbol{→}\AgdaSpace{}%
\AgdaBound{x}\AgdaSpace{}%
\AgdaOperator{\AgdaFunction{≡}}\AgdaSpace{}%
\AgdaBound{y}\AgdaSpace{}%
\AgdaSymbol{→}\AgdaSpace{}%
\AgdaPrimitiveType{Set}\AgdaSpace{}%
\AgdaBound{m}\AgdaSymbol{)}\<%
\\
\>[.][@{}l@{}]\<[937I]%
\>[10]\AgdaSymbol{(}\AgdaBound{p}\AgdaSpace{}%
\AgdaSymbol{:}\AgdaSpace{}%
\AgdaBound{P}\AgdaSpace{}%
\AgdaBound{x}\AgdaSpace{}%
\AgdaSymbol{(}\AgdaPostulate{rfl}\AgdaSpace{}%
\AgdaBound{x}\AgdaSymbol{))(}\AgdaBound{y}\AgdaSpace{}%
\AgdaSymbol{:}\AgdaSpace{}%
\AgdaBound{A}\AgdaSymbol{)(}\AgdaBound{e}\AgdaSpace{}%
\AgdaSymbol{:}\AgdaSpace{}%
\AgdaBound{x}\AgdaSpace{}%
\AgdaOperator{\AgdaFunction{≡}}\AgdaSpace{}%
\AgdaBound{y}\AgdaSymbol{)}\AgdaSpace{}%
\AgdaSymbol{→}\AgdaSpace{}%
\AgdaBound{P}\AgdaSpace{}%
\AgdaBound{y}\AgdaSpace{}%
\AgdaBound{e}\<%
\\
\>[2]\AgdaFunction{≡Comp}\AgdaSpace{}%
\AgdaSymbol{:}%
\>[971I]\AgdaSymbol{∀\{}\AgdaBound{l}\AgdaSpace{}%
\AgdaBound{m}\AgdaSymbol{\}\{}\AgdaBound{A}\AgdaSpace{}%
\AgdaSymbol{:}\AgdaSpace{}%
\AgdaPrimitiveType{Set}\AgdaSpace{}%
\AgdaBound{l}\AgdaSymbol{\}\{}\AgdaBound{x}\AgdaSpace{}%
\AgdaSymbol{:}\AgdaSpace{}%
\AgdaBound{A}\AgdaSymbol{\}(}\AgdaBound{P}\AgdaSpace{}%
\AgdaSymbol{:}\AgdaSpace{}%
\AgdaSymbol{(}\AgdaBound{y}\AgdaSpace{}%
\AgdaSymbol{:}\AgdaSpace{}%
\AgdaBound{A}\AgdaSymbol{)}\AgdaSpace{}%
\AgdaSymbol{→}\AgdaSpace{}%
\AgdaBound{x}\AgdaSpace{}%
\AgdaOperator{\AgdaFunction{≡}}\AgdaSpace{}%
\AgdaBound{y}\AgdaSpace{}%
\AgdaSymbol{→}\AgdaSpace{}%
\AgdaPrimitiveType{Set}\AgdaSpace{}%
\AgdaBound{m}\AgdaSymbol{)}\<%
\\
\>[.][@{}l@{}]\<[971I]%
\>[10]\AgdaSymbol{(}\AgdaBound{p}\AgdaSpace{}%
\AgdaSymbol{:}\AgdaSpace{}%
\AgdaBound{P}\AgdaSpace{}%
\AgdaBound{x}\AgdaSpace{}%
\AgdaSymbol{(}\AgdaPostulate{rfl}\AgdaSpace{}%
\AgdaBound{x}\AgdaSymbol{))}\AgdaSpace{}%
\AgdaSymbol{→}\AgdaSpace{}%
\AgdaFunction{≡Elim}\AgdaSpace{}%
\AgdaBound{P}\AgdaSpace{}%
\AgdaBound{p}\AgdaSpace{}%
\AgdaBound{x}\AgdaSpace{}%
\AgdaSymbol{(}\AgdaPostulate{rfl}\AgdaSpace{}%
\AgdaBound{x}\AgdaSymbol{)}\AgdaSpace{}%
\AgdaOperator{\AgdaFunction{≡}}\AgdaSpace{}%
\AgdaBound{p}\<%
\end{code}}%
  The axioms also imply that $\heq$ has the elimination and (typal)
  computation properties of heterogeneous equality described by
  \citet[Section~5.1.3]{McBrideC:deptfp}
  {\normalfont
\AgdaTarget{≡≡Elim}
\AgdaTarget{≡≡Comp}
\begin{code}%
\>[2]\AgdaFunction{≡≡Elim}%
\>[10]\AgdaSymbol{:}%
\>[1003I]\AgdaSymbol{∀\{}\AgdaBound{l}\AgdaSpace{}%
\AgdaBound{m}\AgdaSymbol{\}\{}\AgdaBound{A}\AgdaSpace{}%
\AgdaSymbol{:}\AgdaSpace{}%
\AgdaPrimitiveType{Set}\AgdaSpace{}%
\AgdaBound{l}\AgdaSymbol{\}\{}\AgdaBound{x}\AgdaSpace{}%
\AgdaSymbol{:}\AgdaSpace{}%
\AgdaBound{A}\AgdaSymbol{\}(}\AgdaBound{P}\AgdaSpace{}%
\AgdaSymbol{:}\AgdaSpace{}%
\AgdaSymbol{(}\AgdaBound{B}\AgdaSpace{}%
\AgdaSymbol{:}\AgdaSpace{}%
\AgdaPrimitiveType{Set}\AgdaSpace{}%
\AgdaBound{l}\AgdaSymbol{)(}\AgdaBound{y}\AgdaSpace{}%
\AgdaSymbol{:}\AgdaSpace{}%
\AgdaBound{B}\AgdaSymbol{)}\AgdaSpace{}%
\AgdaSymbol{→}\AgdaSpace{}%
\AgdaBound{x}\AgdaSpace{}%
\AgdaOperator{\AgdaPostulate{≡≡}}\AgdaSpace{}%
\AgdaBound{y}\AgdaSpace{}%
\AgdaSymbol{→}\AgdaSpace{}%
\AgdaPrimitiveType{Set}\AgdaSpace{}%
\AgdaBound{m}\AgdaSymbol{)}\<%
\\
\>[.][@{}l@{}]\<[1003I]%
\>[12]\AgdaSymbol{(}\AgdaBound{p}\AgdaSpace{}%
\AgdaSymbol{:}\AgdaSpace{}%
\AgdaBound{P}\AgdaSpace{}%
\AgdaBound{A}\AgdaSpace{}%
\AgdaBound{x}\AgdaSpace{}%
\AgdaSymbol{(}\AgdaPostulate{rfl}\AgdaSpace{}%
\AgdaBound{x}\AgdaSymbol{))(}\AgdaBound{B}\AgdaSpace{}%
\AgdaSymbol{:}\AgdaSpace{}%
\AgdaPrimitiveType{Set}\AgdaSpace{}%
\AgdaBound{l}\AgdaSymbol{)(}\AgdaBound{y}\AgdaSpace{}%
\AgdaSymbol{:}\AgdaSpace{}%
\AgdaBound{B}\AgdaSymbol{)(}\AgdaBound{e}\AgdaSpace{}%
\AgdaSymbol{:}\AgdaSpace{}%
\AgdaBound{x}\AgdaSpace{}%
\AgdaOperator{\AgdaPostulate{≡≡}}\AgdaSpace{}%
\AgdaBound{y}\AgdaSymbol{)}\AgdaSpace{}%
\AgdaSymbol{→}\AgdaSpace{}%
\AgdaBound{P}\AgdaSpace{}%
\AgdaBound{B}\AgdaSpace{}%
\AgdaBound{y}\AgdaSpace{}%
\AgdaBound{e}\<%
\\
\>[0]\<%
\\
\>[2]\AgdaFunction{≡≡Comp}%
\>[10]\AgdaSymbol{:}%
\>[1044I]\AgdaSymbol{∀\{}\AgdaBound{l}\AgdaSpace{}%
\AgdaBound{m}\AgdaSymbol{\}\{}\AgdaBound{A}\AgdaSpace{}%
\AgdaSymbol{:}\AgdaSpace{}%
\AgdaPrimitiveType{Set}\AgdaSpace{}%
\AgdaBound{l}\AgdaSymbol{\}\{}\AgdaBound{x}\AgdaSpace{}%
\AgdaSymbol{:}\AgdaSpace{}%
\AgdaBound{A}\AgdaSymbol{\}(}\AgdaBound{P}\AgdaSpace{}%
\AgdaSymbol{:}\AgdaSpace{}%
\AgdaSymbol{(}\AgdaBound{B}\AgdaSpace{}%
\AgdaSymbol{:}\AgdaSpace{}%
\AgdaPrimitiveType{Set}\AgdaSpace{}%
\AgdaBound{l}\AgdaSymbol{)(}\AgdaBound{y}\AgdaSpace{}%
\AgdaSymbol{:}\AgdaSpace{}%
\AgdaBound{B}\AgdaSymbol{)}\AgdaSpace{}%
\AgdaSymbol{→}\AgdaSpace{}%
\AgdaBound{x}\AgdaSpace{}%
\AgdaOperator{\AgdaPostulate{≡≡}}\AgdaSpace{}%
\AgdaBound{y}\AgdaSpace{}%
\AgdaSymbol{→}\AgdaSpace{}%
\AgdaPrimitiveType{Set}\AgdaSpace{}%
\AgdaBound{m}\AgdaSymbol{)}\<%
\\
\>[.][@{}l@{}]\<[1044I]%
\>[12]\AgdaSymbol{(}\AgdaBound{p}\AgdaSpace{}%
\AgdaSymbol{:}\AgdaSpace{}%
\AgdaBound{P}\AgdaSpace{}%
\AgdaBound{A}\AgdaSpace{}%
\AgdaBound{x}\AgdaSpace{}%
\AgdaSymbol{(}\AgdaPostulate{rfl}\AgdaSpace{}%
\AgdaBound{x}\AgdaSymbol{))}\AgdaSpace{}%
\AgdaSymbol{→}\AgdaSpace{}%
\AgdaFunction{≡≡Elim}\AgdaSpace{}%
\AgdaBound{P}\AgdaSpace{}%
\AgdaBound{p}\AgdaSpace{}%
\AgdaBound{A}\AgdaSpace{}%
\AgdaBound{x}\AgdaSpace{}%
\AgdaSymbol{(}\AgdaPostulate{rfl}\AgdaSpace{}%
\AgdaBound{x}\AgdaSymbol{)}\AgdaSpace{}%
\AgdaOperator{\AgdaFunction{≡}}\AgdaSpace{}%
\AgdaBound{p}\<%
\end{code}} 
\end{theorem}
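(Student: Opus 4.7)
The plan is to follow exactly the three-step pattern already used in the excerpt's proof of $\afun{axiomKComp}$: define each eliminator by coercing the base-case element along an equality of target types built as a congruence, then derive the computation rule by collapsing that congruence to reflexivity with $\afun{uip}$ and closing the chain with $\afun{coeIsRegular}$. Lemma~\ref{lem:verlt} is precisely what makes this succeed, because it is $\afun{coe}$, not $\afun{tpt}$ itself, that enjoys the regularity property the typal computation rules demand.

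For the homogeneous case, observe that $\afun{cong₂}\,P\,e\,(\afun{ctr}\,e)$ has type $P\,x\,(\afun{rfl}\,x) \heq P\,y\,e$; since both endpoints live in $\atyp{Set}\,m$, this heterogeneous equality is, by how $\eq$ is defined in Fig.~\ref{fig:axihtpe}, already a homogeneous one, so I would put
\[ \afun{≡Elim}\,P\,p\,y\,e \;:=\; \afun{coe}\,(\afun{cong₂}\,P\,e\,(\afun{ctr}\,e))\,p. \]
For $\afun{≡Comp}$, specialise to $y := x$ and $e := \afun{rfl}\,x$ and run the same chain as $\afun{axiomKComp}$: rewrite the $\afun{cong₂}$-proof under the coercion by $\afun{cong}\,(\lambda e'.\,\afun{coe}\,e'\,p)\,(\afun{uip}\,\_\,\_)$ to replace it with $\afun{rfl}\,\_$, then close with $\afun{coeIsRegular}\,\_\,p$ to arrive at $p$.

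For the heterogeneous case, $P$ has three varying arguments, so in place of $\afun{cong₂}$ I would feed all three pieces of data extractable from $e$ into $\afun{cong₃}$:
\[ \afun{cong₃}\,P\,(\afun{eqt}\,e)\,e\,(\afun{ctr}\,e) \,:\, P\,A\,x\,(\afun{rfl}\,x) \heq P\,B\,y\,e, \]
again a homogeneous equality in $\atyp{Set}\,m$. Setting
$\afun{≡≡Elim}\,P\,p\,B\,y\,e \;:=\; \afun{coe}\,(\afun{cong₃}\,P\,(\afun{eqt}\,e)\,e\,(\afun{ctr}\,e))\,p$
and specialising to $B := A$, $y := x$, $e := \afun{rfl}\,x$, the proof of $\afun{≡≡Comp}$ is the identical three-step chain: $\afun{uip}$ collapses the $\afun{cong₃}$-proof to $\afun{rfl}\,\_$, $\afun{cong}\,(\lambda e'.\,\afun{coe}\,e'\,p)$ transports this under the coercion, and $\afun{coeIsRegular}$ finishes the job.

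The only real obstacle is bookkeeping: verifying that the endpoints of the $\afun{cong₂}$ and $\afun{cong₃}$ expressions indeed inhabit the \emph{same} universe $\atyp{Set}\,m$, so that their $\heq$ is legitimately a homogeneous $\eq$ and may be fed to $\afun{coe}$; and checking that the implicit arguments of the congruences can be inferred in the forms needed. Once that is settled, Axiom~K enters only implicitly via $\afun{uip}$ equating two proofs of a reflexivity on the nose, and nothing remains beyond the $\afun{axiomKComp}$ template already displayed.
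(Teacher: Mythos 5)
Your proposal is correct and matches the paper's own proof essentially verbatim: both eliminators are defined as $\afun{coe}$ applied to $\afun{cong₂}\,P\,e\,(\afun{ctr}\,e)$ (resp.\ $\afun{cong₃}\,P\,(\afun{eqt}\,e)\,e\,(\afun{ctr}\,e)$), and the computation rules follow the same $\afun{uip}$-then-$\afun{coeIsRegular}$ chain. The only cosmetic difference is that for $\afun{≡Comp}$ the paper collapses the congruence proof with $\afun{symm}\,(\afun{ctr}\,\_)$ rather than $\afun{uip}\,\_\,\_$, which is interchangeable (and the paper itself uses $\afun{uip}$ in the analogous step of $\afun{≡≡Comp}$).
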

\begin{proof}
  Using the functions from Fig~\ref{fig:equrheq} and Lemma~\ref{lem:verlt} we have:
\begin{code}%
\>[2]\AgdaFunction{≡Elim}\AgdaSpace{}%
\AgdaBound{P}\AgdaSpace{}%
\AgdaBound{p}\AgdaSpace{}%
\AgdaSymbol{\AgdaUnderscore{}}\AgdaSpace{}%
\AgdaBound{e}%
\>[21]\AgdaSymbol{=}\AgdaSpace{}%
\AgdaFunction{coe}\AgdaSpace{}%
\AgdaSymbol{(}\AgdaFunction{cong₂}\AgdaSpace{}%
\AgdaBound{P}\AgdaSpace{}%
\AgdaBound{e}\AgdaSpace{}%
\AgdaSymbol{(}\AgdaPostulate{ctr}\AgdaSpace{}%
\AgdaBound{e}\AgdaSymbol{))}\AgdaSpace{}%
\AgdaBound{p}\<%
\\
\\[\AgdaEmptyExtraSkip]%
\>[2]\AgdaFunction{≡Comp}\AgdaSpace{}%
\AgdaBound{P}\AgdaSpace{}%
\AgdaBound{p}%
\>[21]\AgdaSymbol{=}%
\>[24]\AgdaOperator{\AgdaFunction{proof}}\<%
\\
\>[24][@{}l@{\AgdaIndent{0}}]%
\>[26]\AgdaFunction{coe}\AgdaSpace{}%
\AgdaSymbol{(}\AgdaFunction{cong₂}\AgdaSpace{}%
\AgdaBound{P}\AgdaSpace{}%
\AgdaSymbol{(}\AgdaPostulate{rfl}\AgdaSpace{}%
\AgdaSymbol{\AgdaUnderscore{})}\AgdaSpace{}%
\AgdaSymbol{(}\AgdaPostulate{ctr}\AgdaSpace{}%
\AgdaSymbol{(}\AgdaPostulate{rfl}\AgdaSpace{}%
\AgdaSymbol{\AgdaUnderscore{})))}\AgdaSpace{}%
\AgdaBound{p}\<%
\\
\>[24]\AgdaOperator{\AgdaFunction{≡≡[}}\AgdaSpace{}%
\AgdaFunction{cong}\AgdaSpace{}%
\AgdaSymbol{(λ}\AgdaSpace{}%
\AgdaBound{e}\AgdaSpace{}%
\AgdaSymbol{→}\AgdaSpace{}%
\AgdaFunction{coe}\AgdaSpace{}%
\AgdaBound{e}\AgdaSpace{}%
\AgdaBound{p}\AgdaSymbol{)}\AgdaSpace{}%
\AgdaSymbol{(}\AgdaFunction{symm}\AgdaSpace{}%
\AgdaSymbol{(}\AgdaPostulate{ctr}\AgdaSpace{}%
\AgdaSymbol{\AgdaUnderscore{}))}\AgdaSpace{}%
\AgdaOperator{\AgdaFunction{]}}\<%
\\
\>[24][@{}l@{\AgdaIndent{0}}]%
\>[26]\AgdaFunction{coe}\AgdaSpace{}%
\AgdaSymbol{(}\AgdaPostulate{rfl}\AgdaSpace{}%
\AgdaSymbol{\AgdaUnderscore{})}\AgdaSpace{}%
\AgdaBound{p}\<%
\\
\>[24]\AgdaOperator{\AgdaFunction{≡≡[}}\AgdaSpace{}%
\AgdaFunction{coeIsRegular}\AgdaSpace{}%
\AgdaSymbol{\AgdaUnderscore{}}\AgdaSpace{}%
\AgdaBound{p}\AgdaSpace{}%
\AgdaOperator{\AgdaFunction{]}}\<%
\\
\>[24][@{}l@{\AgdaIndent{0}}]%
\>[26]\AgdaBound{p}\<%
\\
\>[24]\AgdaOperator{\AgdaFunction{qed}}\<%
\\
\\[\AgdaEmptyExtraSkip]%
\>[2]\AgdaFunction{≡≡Elim}\AgdaSpace{}%
\AgdaBound{P}\AgdaSpace{}%
\AgdaBound{p}\AgdaSpace{}%
\AgdaSymbol{\AgdaUnderscore{}}\AgdaSpace{}%
\AgdaSymbol{\AgdaUnderscore{}}\AgdaSpace{}%
\AgdaBound{e}%
\>[21]\AgdaSymbol{=}\AgdaSpace{}%
\AgdaFunction{coe}\AgdaSpace{}%
\AgdaSymbol{(}\AgdaFunction{cong₃}\AgdaSpace{}%
\AgdaBound{P}\AgdaSpace{}%
\AgdaSymbol{(}\AgdaPostulate{eqt}\AgdaSpace{}%
\AgdaBound{e}\AgdaSymbol{)}\AgdaSpace{}%
\AgdaBound{e}\AgdaSpace{}%
\AgdaSymbol{(}\AgdaPostulate{ctr}\AgdaSpace{}%
\AgdaBound{e}\AgdaSymbol{))}\AgdaSpace{}%
\AgdaBound{p}\<%
\\
\\[\AgdaEmptyExtraSkip]%
\>[2]\AgdaFunction{≡≡Comp}\AgdaSpace{}%
\AgdaBound{P}\AgdaSpace{}%
\AgdaBound{p}%
\>[21]\AgdaSymbol{=}%
\>[24]\AgdaOperator{\AgdaFunction{proof}}\<%
\\
\>[24][@{}l@{\AgdaIndent{0}}]%
\>[26]\AgdaFunction{coe}\AgdaSpace{}%
\AgdaSymbol{(}\AgdaFunction{cong₃}\AgdaSpace{}%
\AgdaBound{P}\AgdaSpace{}%
\AgdaSymbol{(}\AgdaPostulate{eqt}\AgdaSpace{}%
\AgdaSymbol{(}\AgdaPostulate{rfl}\AgdaSpace{}%
\AgdaSymbol{\AgdaUnderscore{}))}\AgdaSpace{}%
\AgdaSymbol{(}\AgdaPostulate{rfl}\AgdaSpace{}%
\AgdaSymbol{\AgdaUnderscore{})}\AgdaSpace{}%
\AgdaSymbol{(}\AgdaPostulate{ctr}\AgdaSpace{}%
\AgdaSymbol{(}\AgdaPostulate{rfl}\AgdaSpace{}%
\AgdaSymbol{\AgdaUnderscore{})))}\AgdaSpace{}%
\AgdaBound{p}\<%
\\
\>[24]\AgdaOperator{\AgdaFunction{≡≡[}}\AgdaSpace{}%
\AgdaFunction{cong}\AgdaSpace{}%
\AgdaSymbol{(λ}\AgdaSpace{}%
\AgdaBound{e}\AgdaSpace{}%
\AgdaSymbol{→}\AgdaSpace{}%
\AgdaFunction{coe}\AgdaSpace{}%
\AgdaBound{e}\AgdaSpace{}%
\AgdaBound{p}\AgdaSymbol{)}\AgdaSpace{}%
\AgdaSymbol{(}\AgdaFunction{uip}\AgdaSpace{}%
\AgdaSymbol{\AgdaUnderscore{}}\AgdaSpace{}%
\AgdaSymbol{\AgdaUnderscore{})}\AgdaSpace{}%
\AgdaOperator{\AgdaFunction{]}}\<%
\\
\>[24][@{}l@{\AgdaIndent{0}}]%
\>[26]\AgdaFunction{coe}\AgdaSpace{}%
\AgdaSymbol{(}\AgdaPostulate{rfl}\AgdaSpace{}%
\AgdaSymbol{\AgdaUnderscore{})}\AgdaSpace{}%
\AgdaBound{p}\<%
\\
\>[24]\AgdaOperator{\AgdaFunction{≡≡[}}\AgdaSpace{}%
\AgdaFunction{coeIsRegular}\AgdaSpace{}%
\AgdaSymbol{\AgdaUnderscore{}}\AgdaSpace{}%
\AgdaBound{p}\AgdaSpace{}%
\AgdaOperator{\AgdaFunction{]}}\<%
\\
\>[24][@{}l@{\AgdaIndent{0}}]%
\>[26]\AgdaBound{p}\<%
\\
\>[21][@{}l@{\AgdaIndent{0}}]%
\>[23]\AgdaOperator{\AgdaFunction{qed}}\<%
\end{code}
\end{proof}

Note that a corollary of the above two theorems is that $\_\heq\_$ is
uniquely determined up to logical equivalence by the axioms in
Fig.~\ref{fig:axihtpe}. In other words, for any other such family of
types $\_\heq'\_\,$, there are functions in either direction between
$\aarg{x}\heq\aarg{y}$ and $\aarg{x}\heq'\aarg{y}\,$; and because of UIP
these are necessarily mutually inverse up to $\heq$ (or $\heq'$).

\begin{remark}
  $\afun{≡≡Elim}$ is the elimination form systematically
  derived~\citep{BackhouseR:dott} from $\_\heq\_$ and $\afun{rfl}$,
  regarding them as the formation and introduction rules for an
  inductive type. As \citet[page~120]{McBrideC:deptfp} points out,
  $\afun{≡≡Elim}$ is not very useful, because of the way it's motive
  $\aarg{P}$ involves abstraction over an arbitrary type
  $\aarg{B}$. McBride goes on to give another, more useful form of
  elimination for $\heq$, but in our setting where $\eq$ is a special
  case of $\heq$, that coincides with the eliminator $\afun{≡Elim}$.
\end{remark}

\begin{remark}[The role of $\Sigma$-types]
  \label{rem:rolst}
  One of the strengths of machine-checked mathematics is that it aids
  the detection of logical dependency. Although we included the
  equations $\afun{fpr}$, $\afun{spr}$ and $\afun{eta}$ for
  $\Sigma$-types in Fig.~\ref{fig:axihtpe}, they have not been used
  for the results so far, as may be verified by commenting them out
  from this literate Agda file and re-checking it up to this point.

  So only the weak form of dependent product given by $\afun{∑}$,
  $\_\afun{,}\_$, $\afun{fst}$ and $\afun{snd}$ in the figure is used
  to define the regular version of coercion in Lemma~\ref{lem:verlt}
  and then prove Theorems~\ref{thm:uipak} and \ref{thm:elitcp}. It
  would be nice if there was some way to define $\afun{∑}$,
  $\_\afun{,}\_$, $\afun{fst}$ and $\afun{snd}$ just using dependent
  function types and universes.

  However, the extra equations $\afun{fpr}$, $\afun{spr}$ and
  $\afun{eta}$ for $\afun{∑}$ are of course very natural. Let us
  record the fact that they enable one to define the usual elimination
  rule for dependent products, with a typal computation rule:
  \AgdaTarget{∑Elim} \AgdaTarget{∑Comp}
\begin{code}%
\>[2]\AgdaFunction{∑Elim}%
\>[17]\AgdaSymbol{:}%
\>[20]\AgdaSymbol{∀\{}\AgdaBound{l}\AgdaSpace{}%
\AgdaBound{m}\AgdaSpace{}%
\AgdaBound{n}\AgdaSymbol{\}\{}\AgdaBound{A}\AgdaSpace{}%
\AgdaSymbol{:}\AgdaSpace{}%
\AgdaPrimitiveType{Set}\AgdaSpace{}%
\AgdaBound{l}\AgdaSymbol{\}\{}\AgdaBound{B}\AgdaSpace{}%
\AgdaSymbol{:}\AgdaSpace{}%
\AgdaBound{A}\AgdaSpace{}%
\AgdaSymbol{→}\AgdaSpace{}%
\AgdaPrimitiveType{Set}\AgdaSpace{}%
\AgdaBound{m}\AgdaSymbol{\}(}\AgdaBound{C}\AgdaSpace{}%
\AgdaSymbol{:}\AgdaSpace{}%
\AgdaPostulate{∑}\AgdaSpace{}%
\AgdaBound{A}\AgdaSpace{}%
\AgdaBound{B}\AgdaSpace{}%
\AgdaSymbol{→}\AgdaSpace{}%
\AgdaPrimitiveType{Set}\AgdaSpace{}%
\AgdaBound{n}\AgdaSymbol{)}\<%
\\
\>[20]\AgdaSymbol{(}\AgdaBound{c}\AgdaSpace{}%
\AgdaSymbol{:}\AgdaSpace{}%
\AgdaSymbol{(}\AgdaBound{x}\AgdaSpace{}%
\AgdaSymbol{:}\AgdaSpace{}%
\AgdaBound{A}\AgdaSymbol{)(}\AgdaBound{y}\AgdaSpace{}%
\AgdaSymbol{:}\AgdaSpace{}%
\AgdaBound{B}\AgdaSpace{}%
\AgdaBound{x}\AgdaSymbol{)}\AgdaSpace{}%
\AgdaSymbol{→}\AgdaSpace{}%
\AgdaBound{C}\AgdaSpace{}%
\AgdaSymbol{(}\AgdaBound{x}\AgdaSpace{}%
\AgdaOperator{\AgdaPostulate{,}}\AgdaSpace{}%
\AgdaBound{y}\AgdaSymbol{))(}\AgdaBound{z}\AgdaSpace{}%
\AgdaSymbol{:}\AgdaSpace{}%
\AgdaPostulate{∑}\AgdaSpace{}%
\AgdaBound{A}\AgdaSpace{}%
\AgdaBound{B}\AgdaSymbol{)}\AgdaSpace{}%
\AgdaSymbol{→}\AgdaSpace{}%
\AgdaBound{C}\AgdaSpace{}%
\AgdaBound{z}\<%
\\
\>[2]\AgdaFunction{∑Elim}\AgdaSpace{}%
\AgdaBound{C}\AgdaSpace{}%
\AgdaBound{c}\AgdaSpace{}%
\AgdaBound{z}%
\>[17]\AgdaSymbol{=}%
\>[20]\AgdaFunction{coe}\AgdaSpace{}%
\AgdaSymbol{(}\AgdaFunction{cong}\AgdaSpace{}%
\AgdaBound{C}\AgdaSpace{}%
\AgdaSymbol{(}\AgdaPostulate{eta}\AgdaSpace{}%
\AgdaBound{z}\AgdaSymbol{))}\AgdaSpace{}%
\AgdaSymbol{(}\AgdaBound{c}\AgdaSpace{}%
\AgdaSymbol{(}\AgdaPostulate{fst}\AgdaSpace{}%
\AgdaBound{z}\AgdaSymbol{)}\AgdaSpace{}%
\AgdaSymbol{(}\AgdaPostulate{snd}\AgdaSpace{}%
\AgdaBound{z}\AgdaSymbol{))}\<%
\\
\>[0]\<%
\\
\>[2]\AgdaFunction{∑Comp}%
\>[17]\AgdaSymbol{:}%
\>[20]\AgdaSymbol{∀}\AgdaSpace{}%
\AgdaSymbol{\{}\AgdaBound{l}\AgdaSpace{}%
\AgdaBound{m}\AgdaSpace{}%
\AgdaBound{n}\AgdaSymbol{\}\{}\AgdaBound{A}\AgdaSpace{}%
\AgdaSymbol{:}\AgdaSpace{}%
\AgdaPrimitiveType{Set}\AgdaSpace{}%
\AgdaBound{l}\AgdaSymbol{\}\{}\AgdaBound{B}\AgdaSpace{}%
\AgdaSymbol{:}\AgdaSpace{}%
\AgdaBound{A}\AgdaSpace{}%
\AgdaSymbol{→}\AgdaSpace{}%
\AgdaPrimitiveType{Set}\AgdaSpace{}%
\AgdaBound{m}\AgdaSymbol{\}(}\AgdaBound{C}\AgdaSpace{}%
\AgdaSymbol{:}\AgdaSpace{}%
\AgdaPostulate{∑}\AgdaSpace{}%
\AgdaBound{A}\AgdaSpace{}%
\AgdaBound{B}\AgdaSpace{}%
\AgdaSymbol{→}\AgdaSpace{}%
\AgdaPrimitiveType{Set}\AgdaSpace{}%
\AgdaBound{n}\AgdaSymbol{)}\<%
\\
\>[20]\AgdaSymbol{(}\AgdaBound{c}\AgdaSpace{}%
\AgdaSymbol{:}\AgdaSpace{}%
\AgdaSymbol{(}\AgdaBound{x}\AgdaSpace{}%
\AgdaSymbol{:}\AgdaSpace{}%
\AgdaBound{A}\AgdaSymbol{)(}\AgdaBound{y}\AgdaSpace{}%
\AgdaSymbol{:}\AgdaSpace{}%
\AgdaBound{B}\AgdaSpace{}%
\AgdaBound{x}\AgdaSymbol{)}\AgdaSpace{}%
\AgdaSymbol{→}\AgdaSpace{}%
\AgdaBound{C}\AgdaSpace{}%
\AgdaSymbol{(}\AgdaBound{x}\AgdaSpace{}%
\AgdaOperator{\AgdaPostulate{,}}\AgdaSpace{}%
\AgdaBound{y}\AgdaSymbol{))(}\AgdaBound{x}\AgdaSpace{}%
\AgdaSymbol{:}\AgdaSpace{}%
\AgdaBound{A}\AgdaSymbol{)(}\AgdaBound{y}\AgdaSpace{}%
\AgdaSymbol{:}\AgdaSpace{}%
\AgdaBound{B}\AgdaSpace{}%
\AgdaBound{x}\AgdaSymbol{)}\AgdaSpace{}%
\AgdaSymbol{→}\<%
\\
\>[20]\AgdaFunction{∑Elim}\AgdaSpace{}%
\AgdaBound{C}\AgdaSpace{}%
\AgdaBound{c}\AgdaSpace{}%
\AgdaSymbol{(}\AgdaBound{x}\AgdaSpace{}%
\AgdaOperator{\AgdaPostulate{,}}\AgdaSpace{}%
\AgdaBound{y}\AgdaSymbol{)}\AgdaSpace{}%
\AgdaOperator{\AgdaFunction{≡}}\AgdaSpace{}%
\AgdaBound{c}\AgdaSpace{}%
\AgdaBound{x}\AgdaSpace{}%
\AgdaBound{y}\<%
\\
\>[2]\AgdaFunction{∑Comp}\AgdaSpace{}%
\AgdaBound{C}\AgdaSpace{}%
\AgdaBound{c}\AgdaSpace{}%
\AgdaBound{x}\AgdaSpace{}%
\AgdaBound{y}%
\>[17]\AgdaSymbol{=}%
\>[20]\AgdaKeyword{let}\AgdaSpace{}%
\AgdaBound{z}\AgdaSpace{}%
\AgdaSymbol{=}\AgdaSpace{}%
\AgdaSymbol{(}\AgdaBound{x}\AgdaSpace{}%
\AgdaOperator{\AgdaPostulate{,}}\AgdaSpace{}%
\AgdaBound{y}\AgdaSymbol{)}\AgdaSpace{}%
\AgdaKeyword{in}\<%
\\
\>[20]\AgdaOperator{\AgdaFunction{proof}}\<%
\\
\>[20][@{}l@{\AgdaIndent{0}}]%
\>[22]\AgdaFunction{coe}\AgdaSpace{}%
\AgdaSymbol{(}\AgdaFunction{cong}\AgdaSpace{}%
\AgdaBound{C}\AgdaSpace{}%
\AgdaSymbol{(}\AgdaPostulate{eta}\AgdaSpace{}%
\AgdaBound{z}\AgdaSymbol{))}\AgdaSpace{}%
\AgdaSymbol{(}\AgdaBound{c}\AgdaSpace{}%
\AgdaSymbol{(}\AgdaPostulate{fst}\AgdaSpace{}%
\AgdaBound{z}\AgdaSymbol{)}\AgdaSpace{}%
\AgdaSymbol{(}\AgdaPostulate{snd}\AgdaSpace{}%
\AgdaBound{z}\AgdaSymbol{))}\<%
\\
\>[20]\AgdaOperator{\AgdaFunction{≡≡[}}\AgdaSpace{}%
\AgdaFunction{coeIsRegular}\AgdaSpace{}%
\AgdaSymbol{\AgdaUnderscore{}}\AgdaSpace{}%
\AgdaSymbol{\AgdaUnderscore{}}\AgdaSpace{}%
\AgdaOperator{\AgdaFunction{]}}\<%
\\
\>[20][@{}l@{\AgdaIndent{0}}]%
\>[22]\AgdaBound{c}\AgdaSpace{}%
\AgdaSymbol{(}\AgdaPostulate{fst}\AgdaSpace{}%
\AgdaBound{z}\AgdaSymbol{)}\AgdaSpace{}%
\AgdaSymbol{(}\AgdaPostulate{snd}\AgdaSpace{}%
\AgdaBound{z}\AgdaSymbol{)}\<%
\\
\>[20]\AgdaOperator{\AgdaFunction{≡≡[}}\AgdaSpace{}%
\AgdaFunction{cong₂}\AgdaSpace{}%
\AgdaBound{c}\AgdaSpace{}%
\AgdaSymbol{(}\AgdaPostulate{fpr}\AgdaSpace{}%
\AgdaBound{x}\AgdaSpace{}%
\AgdaBound{y}\AgdaSymbol{)}\AgdaSpace{}%
\AgdaSymbol{(}\AgdaPostulate{spr}\AgdaSpace{}%
\AgdaBound{x}\AgdaSpace{}%
\AgdaBound{y}\AgdaSymbol{)}\AgdaSpace{}%
\AgdaOperator{\AgdaFunction{]}}\<%
\\
\>[20][@{}l@{\AgdaIndent{0}}]%
\>[22]\AgdaBound{c}\AgdaSpace{}%
\AgdaBound{x}\AgdaSpace{}%
\AgdaBound{y}\<%
\\
\>[20]\AgdaOperator{\AgdaFunction{qed}}\<%
\end{code}
\end{remark}

\section{Consistency of the axioms}
\label{sec:cona}

We have seen that the axioms in Fig.~\ref{fig:axihtpe} suffice to
define dependent products and both heterogeneous and homogeneous
equality types with uniqueness of identity proofs, all satisfying the
usual elimination properties, albeit with typal computation
rules. Conversely it is not hard to see that the elimination
and computation rules in Theorem~\ref{thm:elitcp} and
Remark~\ref{rem:rolst}, together with Axiom~K, imply the axioms in
Fig.~\ref{fig:axihtpe}. Instead of doing that, in this section we just
check that the axioms are provable from inductive definitions of
equality and dependent product types.  One can make these inductive
definitions in Agda as follows:
\begin{code}[hide]%
\>[0]\AgdaKeyword{module}\AgdaSpace{}%
\AgdaModule{Consistency}\AgdaSpace{}%
\AgdaKeyword{where}\<%
\\
\>[0][@{}l@{\AgdaIndent{0}}]%
\>[2]\AgdaKeyword{infix}%
\>[9]\AgdaNumber{3}\AgdaSpace{}%
\AgdaOperator{\AgdaInductiveConstructor{\AgdaUnderscore{},\AgdaUnderscore{}}}\<%
\\
\>[2]\AgdaKeyword{infix}\AgdaSpace{}%
\AgdaNumber{4}\AgdaSpace{}%
\AgdaOperator{\AgdaDatatype{\AgdaUnderscore{}≡≡\AgdaUnderscore{}}}\AgdaSpace{}%
\AgdaOperator{\AgdaFunction{\AgdaUnderscore{}≡\AgdaUnderscore{}}}\<%
\\
\>[0]\<%
\\
\>[2]\AgdaComment{-- -- without this \AgdaUnderscore{}≡≡\AgdaUnderscore{} would have to land in Set (lsuc l)}\<%
\\
\>[2]\AgdaComment{-- \{-\# NO\AgdaUnderscore{}UNIVERSE\AgdaUnderscore{}CHECK \#-\}  }\<%
\end{code}
\begin{code}%
\>[2]\AgdaKeyword{data}\AgdaSpace{}%
\AgdaOperator{\AgdaDatatype{\AgdaUnderscore{}≡≡\AgdaUnderscore{}}}\AgdaSpace{}%
\AgdaSymbol{\{}\AgdaBound{l}\AgdaSymbol{\}\{}\AgdaBound{A}\AgdaSpace{}%
\AgdaSymbol{:}\AgdaSpace{}%
\AgdaPrimitiveType{Set}\AgdaSpace{}%
\AgdaBound{l}\AgdaSymbol{\}}\AgdaSpace{}%
\AgdaSymbol{:}\AgdaSpace{}%
\AgdaSymbol{\{}\AgdaBound{B}\AgdaSpace{}%
\AgdaSymbol{:}\AgdaSpace{}%
\AgdaPrimitiveType{Set}\AgdaSpace{}%
\AgdaBound{l}\AgdaSymbol{\}}\AgdaSpace{}%
\AgdaSymbol{→}\AgdaSpace{}%
\AgdaBound{A}\AgdaSpace{}%
\AgdaSymbol{→}\AgdaSpace{}%
\AgdaBound{B}\AgdaSpace{}%
\AgdaSymbol{→}\AgdaSpace{}%
\AgdaPrimitiveType{Set}\AgdaSpace{}%
\AgdaBound{l}\AgdaSpace{}%
\AgdaKeyword{where}\<%
\\
\>[2][@{}l@{\AgdaIndent{0}}]%
\>[4]\AgdaInductiveConstructor{rfl}\AgdaSpace{}%
\AgdaSymbol{:}\AgdaSpace{}%
\AgdaSymbol{(}\AgdaBound{x}\AgdaSpace{}%
\AgdaSymbol{:}\AgdaSpace{}%
\AgdaBound{A}\AgdaSymbol{)}\AgdaSpace{}%
\AgdaSymbol{→}\AgdaSpace{}%
\AgdaBound{x}\AgdaSpace{}%
\AgdaOperator{\AgdaDatatype{≡≡}}\AgdaSpace{}%
\AgdaBound{x}\<%
\\
\>[2]\AgdaKeyword{data}\AgdaSpace{}%
\AgdaDatatype{∑}\AgdaSpace{}%
\AgdaSymbol{\{}\AgdaBound{l}\AgdaSpace{}%
\AgdaBound{m}\AgdaSymbol{\}(}\AgdaBound{A}\AgdaSpace{}%
\AgdaSymbol{:}\AgdaSpace{}%
\AgdaPrimitiveType{Set}\AgdaSpace{}%
\AgdaBound{l}\AgdaSymbol{)(}\AgdaBound{B}\AgdaSpace{}%
\AgdaSymbol{:}\AgdaSpace{}%
\AgdaBound{A}\AgdaSpace{}%
\AgdaSymbol{→}\AgdaSpace{}%
\AgdaPrimitiveType{Set}\AgdaSpace{}%
\AgdaBound{m}\AgdaSymbol{)}\AgdaSpace{}%
\AgdaSymbol{:}\AgdaSpace{}%
\AgdaPrimitiveType{Set}\AgdaSpace{}%
\AgdaSymbol{(}\AgdaBound{l}\AgdaSpace{}%
\AgdaOperator{\AgdaPrimitive{⊔}}\AgdaSpace{}%
\AgdaBound{m}\AgdaSymbol{)}\AgdaSpace{}%
\AgdaKeyword{where}\<%
\\
\>[2][@{}l@{\AgdaIndent{0}}]%
\>[4]\AgdaOperator{\AgdaInductiveConstructor{\AgdaUnderscore{},\AgdaUnderscore{}}}\AgdaSpace{}%
\AgdaSymbol{:}\AgdaSpace{}%
\AgdaSymbol{(}\AgdaBound{x}\AgdaSpace{}%
\AgdaSymbol{:}\AgdaSpace{}%
\AgdaBound{A}\AgdaSymbol{)}\AgdaSpace{}%
\AgdaSymbol{→}\AgdaSpace{}%
\AgdaBound{B}\AgdaSpace{}%
\AgdaBound{x}\AgdaSpace{}%
\AgdaSymbol{→}\AgdaSpace{}%
\AgdaDatatype{∑}\AgdaSpace{}%
\AgdaBound{A}\AgdaSpace{}%
\AgdaBound{B}\<%
\\
\>[2]\AgdaComment{-- the derived homogeneous equality}\<%
\\
\>[2]\AgdaOperator{\AgdaFunction{\AgdaUnderscore{}≡\AgdaUnderscore{}}}\AgdaSpace{}%
\AgdaSymbol{:}\AgdaSpace{}%
\AgdaSymbol{∀\{}\AgdaBound{l}\AgdaSymbol{\}\{}\AgdaBound{A}\AgdaSpace{}%
\AgdaSymbol{:}\AgdaSpace{}%
\AgdaPrimitiveType{Set}\AgdaSpace{}%
\AgdaBound{l}\AgdaSymbol{\}}\AgdaSpace{}%
\AgdaSymbol{→}\AgdaSpace{}%
\AgdaBound{A}\AgdaSpace{}%
\AgdaSymbol{→}\AgdaSpace{}%
\AgdaBound{A}\AgdaSpace{}%
\AgdaSymbol{→}\AgdaSpace{}%
\AgdaPrimitiveType{Set}\AgdaSpace{}%
\AgdaBound{l}\<%
\\
\>[2]\AgdaBound{x}\AgdaSpace{}%
\AgdaOperator{\AgdaFunction{≡}}\AgdaSpace{}%
\AgdaBound{y}\AgdaSpace{}%
\AgdaSymbol{=}\AgdaSpace{}%
\AgdaBound{x}\AgdaSpace{}%
\AgdaOperator{\AgdaDatatype{≡≡}}\AgdaSpace{}%
\AgdaBound{y}\<%
\end{code}
Then Agda's implementation of dependent pattern matching enables
straightforward definitions of the functions from
Fig.~\ref{fig:axihtpe}, as follows:
\begin{code}%
\>[2]\AgdaFunction{ctr}\AgdaSpace{}%
\AgdaSymbol{:}\AgdaSpace{}%
\AgdaSymbol{∀\{}\AgdaBound{l}\AgdaSymbol{\}\{}\AgdaBound{A}\AgdaSpace{}%
\AgdaBound{B}\AgdaSpace{}%
\AgdaSymbol{:}\AgdaSpace{}%
\AgdaPrimitiveType{Set}\AgdaSpace{}%
\AgdaBound{l}\AgdaSymbol{\}\{}\AgdaBound{x}\AgdaSpace{}%
\AgdaSymbol{:}\AgdaSpace{}%
\AgdaBound{A}\AgdaSymbol{\}\{}\AgdaBound{y}\AgdaSpace{}%
\AgdaSymbol{:}\AgdaSpace{}%
\AgdaBound{B}\AgdaSymbol{\}(}\AgdaBound{e}\AgdaSpace{}%
\AgdaSymbol{:}\AgdaSpace{}%
\AgdaBound{x}\AgdaSpace{}%
\AgdaOperator{\AgdaDatatype{≡≡}}\AgdaSpace{}%
\AgdaBound{y}\AgdaSymbol{)}\AgdaSpace{}%
\AgdaSymbol{→}\AgdaSpace{}%
\AgdaInductiveConstructor{rfl}\AgdaSpace{}%
\AgdaBound{x}\AgdaSpace{}%
\AgdaOperator{\AgdaDatatype{≡≡}}\AgdaSpace{}%
\AgdaBound{e}\<%
\\
\>[2]\AgdaFunction{ctr}\AgdaSpace{}%
\AgdaSymbol{(}\AgdaInductiveConstructor{rfl}\AgdaSpace{}%
\AgdaBound{x}\AgdaSymbol{)}\AgdaSpace{}%
\AgdaSymbol{=}\AgdaSpace{}%
\AgdaInductiveConstructor{rfl}\AgdaSpace{}%
\AgdaSymbol{(}\AgdaInductiveConstructor{rfl}\AgdaSpace{}%
\AgdaBound{x}\AgdaSymbol{)}\<%
\\
\>[0]\<%
\\
\>[2]\AgdaFunction{eqt}\AgdaSpace{}%
\AgdaSymbol{:}\AgdaSpace{}%
\AgdaSymbol{∀\{}\AgdaBound{l}\AgdaSymbol{\}\{}\AgdaBound{A}\AgdaSpace{}%
\AgdaBound{B}\AgdaSpace{}%
\AgdaSymbol{:}\AgdaSpace{}%
\AgdaPrimitiveType{Set}\AgdaSpace{}%
\AgdaBound{l}\AgdaSymbol{\}\{}\AgdaBound{x}\AgdaSpace{}%
\AgdaSymbol{:}\AgdaSpace{}%
\AgdaBound{A}\AgdaSymbol{\}\{}\AgdaBound{y}\AgdaSpace{}%
\AgdaSymbol{:}\AgdaSpace{}%
\AgdaBound{B}\AgdaSymbol{\}}\AgdaSpace{}%
\AgdaSymbol{→}\AgdaSpace{}%
\AgdaBound{x}\AgdaSpace{}%
\AgdaOperator{\AgdaDatatype{≡≡}}\AgdaSpace{}%
\AgdaBound{y}\AgdaSpace{}%
\AgdaSymbol{→}\AgdaSpace{}%
\AgdaBound{A}\AgdaSpace{}%
\AgdaOperator{\AgdaFunction{≡}}\AgdaSpace{}%
\AgdaBound{B}\<%
\\
\>[2]\AgdaFunction{eqt}\AgdaSpace{}%
\AgdaSymbol{\{\AgdaUnderscore{}\}}\AgdaSpace{}%
\AgdaSymbol{\{}\AgdaBound{A}\AgdaSymbol{\}}\AgdaSpace{}%
\AgdaSymbol{(}\AgdaInductiveConstructor{rfl}\AgdaSpace{}%
\AgdaSymbol{\AgdaUnderscore{})}\AgdaSpace{}%
\AgdaSymbol{=}\AgdaSpace{}%
\AgdaInductiveConstructor{rfl}\AgdaSpace{}%
\AgdaBound{A}\<%
\\
\\[\AgdaEmptyExtraSkip]%
\>[2]\AgdaFunction{tpt}\AgdaSpace{}%
\AgdaSymbol{:}%
\>[1425I]\AgdaSymbol{∀\{}\AgdaBound{l}\AgdaSpace{}%
\AgdaBound{m}\AgdaSpace{}%
\AgdaBound{n}\AgdaSymbol{\}\{}\AgdaBound{A}\AgdaSpace{}%
\AgdaSymbol{:}\AgdaSpace{}%
\AgdaPrimitiveType{Set}\AgdaSpace{}%
\AgdaBound{l}\AgdaSymbol{\}\{}\AgdaBound{B}\AgdaSpace{}%
\AgdaSymbol{:}\AgdaSpace{}%
\AgdaBound{A}\AgdaSpace{}%
\AgdaSymbol{→}\AgdaSpace{}%
\AgdaPrimitiveType{Set}\AgdaSpace{}%
\AgdaBound{m}\AgdaSymbol{\}(}\AgdaBound{C}\AgdaSpace{}%
\AgdaSymbol{:}\AgdaSpace{}%
\AgdaSymbol{(}\AgdaBound{x}\AgdaSpace{}%
\AgdaSymbol{:}\AgdaSpace{}%
\AgdaBound{A}\AgdaSymbol{)}\AgdaSpace{}%
\AgdaSymbol{→}\AgdaSpace{}%
\AgdaBound{B}\AgdaSpace{}%
\AgdaBound{x}\AgdaSpace{}%
\AgdaSymbol{→}\AgdaSpace{}%
\AgdaPrimitiveType{Set}\AgdaSpace{}%
\AgdaBound{n}\AgdaSymbol{)}\<%
\\
\>[.][@{}l@{}]\<[1425I]%
\>[8]\AgdaSymbol{\{}\AgdaBound{x}\AgdaSpace{}%
\AgdaBound{x′}\AgdaSpace{}%
\AgdaSymbol{:}\AgdaSpace{}%
\AgdaBound{A}\AgdaSymbol{\}\{}\AgdaBound{y}\AgdaSpace{}%
\AgdaSymbol{:}\AgdaSpace{}%
\AgdaBound{B}\AgdaSpace{}%
\AgdaBound{x}\AgdaSymbol{\}\{}\AgdaBound{y′}\AgdaSpace{}%
\AgdaSymbol{:}\AgdaSpace{}%
\AgdaBound{B}\AgdaSpace{}%
\AgdaBound{x′}\AgdaSymbol{\}}\AgdaSpace{}%
\AgdaSymbol{→}\AgdaSpace{}%
\AgdaBound{x}\AgdaSpace{}%
\AgdaOperator{\AgdaFunction{≡}}\AgdaSpace{}%
\AgdaBound{x′}\AgdaSpace{}%
\AgdaSymbol{→}\AgdaSpace{}%
\AgdaBound{y}\AgdaSpace{}%
\AgdaOperator{\AgdaDatatype{≡≡}}\AgdaSpace{}%
\AgdaBound{y′}\AgdaSpace{}%
\AgdaSymbol{→}\AgdaSpace{}%
\AgdaBound{C}\AgdaSpace{}%
\AgdaBound{x}\AgdaSpace{}%
\AgdaBound{y}\AgdaSpace{}%
\AgdaSymbol{→}\AgdaSpace{}%
\AgdaBound{C}\AgdaSpace{}%
\AgdaBound{x′}\AgdaSpace{}%
\AgdaBound{y′}\<%
\\
\>[2]\AgdaFunction{tpt}\AgdaSpace{}%
\AgdaSymbol{\AgdaUnderscore{}}\AgdaSpace{}%
\AgdaSymbol{(}\AgdaInductiveConstructor{rfl}\AgdaSpace{}%
\AgdaSymbol{\AgdaUnderscore{})}\AgdaSpace{}%
\AgdaSymbol{(}\AgdaInductiveConstructor{rfl}\AgdaSpace{}%
\AgdaSymbol{\AgdaUnderscore{})}\AgdaSpace{}%
\AgdaBound{y}\AgdaSpace{}%
\AgdaSymbol{=}\AgdaSpace{}%
\AgdaBound{y}\<%
\\
\\[\AgdaEmptyExtraSkip]%
\>[2]\AgdaKeyword{module}\AgdaSpace{}%
\AgdaModule{\AgdaUnderscore{}}\AgdaSpace{}%
\AgdaSymbol{\{}\AgdaBound{l}\AgdaSpace{}%
\AgdaBound{m}\AgdaSymbol{\}\{}\AgdaBound{A}\AgdaSpace{}%
\AgdaSymbol{:}\AgdaSpace{}%
\AgdaPrimitiveType{Set}\AgdaSpace{}%
\AgdaBound{l}\AgdaSymbol{\}\{}\AgdaBound{B}\AgdaSpace{}%
\AgdaSymbol{:}\AgdaSpace{}%
\AgdaBound{A}\AgdaSpace{}%
\AgdaSymbol{→}\AgdaSpace{}%
\AgdaPrimitiveType{Set}\AgdaSpace{}%
\AgdaBound{m}\AgdaSymbol{\}}\AgdaSpace{}%
\AgdaKeyword{where}\<%
\\
\>[2][@{}l@{\AgdaIndent{0}}]%
\>[4]\AgdaFunction{fst}\AgdaSpace{}%
\AgdaSymbol{:}\AgdaSpace{}%
\AgdaDatatype{∑}\AgdaSpace{}%
\AgdaBound{A}\AgdaSpace{}%
\AgdaBound{B}\AgdaSpace{}%
\AgdaSymbol{→}\AgdaSpace{}%
\AgdaBound{A}\<%
\\
\>[4]\AgdaFunction{fst}\AgdaSpace{}%
\AgdaSymbol{(}\AgdaBound{x}\AgdaSpace{}%
\AgdaOperator{\AgdaInductiveConstructor{,}}\AgdaSpace{}%
\AgdaSymbol{\AgdaUnderscore{})}\AgdaSpace{}%
\AgdaSymbol{=}\AgdaSpace{}%
\AgdaBound{x}\<%
\\
\>[0]\<%
\\
\>[4]\AgdaFunction{snd}\AgdaSpace{}%
\AgdaSymbol{:}\AgdaSpace{}%
\AgdaSymbol{(}\AgdaBound{z}\AgdaSpace{}%
\AgdaSymbol{:}\AgdaSpace{}%
\AgdaDatatype{∑}\AgdaSpace{}%
\AgdaBound{A}\AgdaSpace{}%
\AgdaBound{B}\AgdaSymbol{)}\AgdaSpace{}%
\AgdaSymbol{→}\AgdaSpace{}%
\AgdaBound{B}\AgdaSpace{}%
\AgdaSymbol{(}\AgdaFunction{fst}\AgdaSpace{}%
\AgdaBound{z}\AgdaSymbol{)}\<%
\\
\>[4]\AgdaFunction{snd}\AgdaSpace{}%
\AgdaSymbol{(\AgdaUnderscore{}}\AgdaSpace{}%
\AgdaOperator{\AgdaInductiveConstructor{,}}\AgdaSpace{}%
\AgdaBound{y}\AgdaSymbol{)}\AgdaSpace{}%
\AgdaSymbol{=}\AgdaSpace{}%
\AgdaBound{y}\<%
\\
\>[0]\<%
\\
\>[4]\AgdaFunction{fpr}\AgdaSpace{}%
\AgdaSymbol{:}\AgdaSpace{}%
\AgdaSymbol{(}\AgdaBound{x}\AgdaSpace{}%
\AgdaSymbol{:}\AgdaSpace{}%
\AgdaBound{A}\AgdaSymbol{)(}\AgdaBound{y}\AgdaSpace{}%
\AgdaSymbol{:}\AgdaSpace{}%
\AgdaBound{B}\AgdaSpace{}%
\AgdaBound{x}\AgdaSymbol{)}\AgdaSpace{}%
\AgdaSymbol{→}\AgdaSpace{}%
\AgdaFunction{fst}\AgdaSpace{}%
\AgdaSymbol{(}\AgdaBound{x}\AgdaSpace{}%
\AgdaOperator{\AgdaInductiveConstructor{,}}\AgdaSpace{}%
\AgdaBound{y}\AgdaSymbol{)}\AgdaSpace{}%
\AgdaOperator{\AgdaFunction{≡}}\AgdaSpace{}%
\AgdaBound{x}\<%
\\
\>[4]\AgdaFunction{fpr}\AgdaSpace{}%
\AgdaBound{x}\AgdaSpace{}%
\AgdaSymbol{\AgdaUnderscore{}}\AgdaSpace{}%
\AgdaSymbol{=}\AgdaSpace{}%
\AgdaInductiveConstructor{rfl}\AgdaSpace{}%
\AgdaBound{x}\<%
\\
\>[0]\<%
\\
\>[4]\AgdaFunction{spr}\AgdaSpace{}%
\AgdaSymbol{:}\AgdaSpace{}%
\AgdaSymbol{(}\AgdaBound{x}\AgdaSpace{}%
\AgdaSymbol{:}\AgdaSpace{}%
\AgdaBound{A}\AgdaSymbol{)(}\AgdaBound{y}\AgdaSpace{}%
\AgdaSymbol{:}\AgdaSpace{}%
\AgdaBound{B}\AgdaSpace{}%
\AgdaBound{x}\AgdaSymbol{)}\AgdaSpace{}%
\AgdaSymbol{→}\AgdaSpace{}%
\AgdaFunction{snd}\AgdaSpace{}%
\AgdaSymbol{(}\AgdaBound{x}\AgdaSpace{}%
\AgdaOperator{\AgdaInductiveConstructor{,}}\AgdaSpace{}%
\AgdaBound{y}\AgdaSymbol{)}\AgdaSpace{}%
\AgdaOperator{\AgdaDatatype{≡≡}}\AgdaSpace{}%
\AgdaBound{y}\<%
\\
\>[4]\AgdaFunction{spr}\AgdaSpace{}%
\AgdaSymbol{\AgdaUnderscore{}}\AgdaSpace{}%
\AgdaBound{y}\AgdaSpace{}%
\AgdaSymbol{=}\AgdaSpace{}%
\AgdaInductiveConstructor{rfl}\AgdaSpace{}%
\AgdaBound{y}\<%
\\
\>[0]\<%
\\
\>[4]\AgdaFunction{eta}\AgdaSpace{}%
\AgdaSymbol{:}\AgdaSpace{}%
\AgdaSymbol{(}\AgdaBound{z}\AgdaSpace{}%
\AgdaSymbol{:}\AgdaSpace{}%
\AgdaDatatype{∑}\AgdaSpace{}%
\AgdaBound{A}\AgdaSpace{}%
\AgdaBound{B}\AgdaSymbol{)}\AgdaSpace{}%
\AgdaSymbol{→}\AgdaSpace{}%
\AgdaSymbol{(}\AgdaFunction{fst}\AgdaSpace{}%
\AgdaBound{z}\AgdaSpace{}%
\AgdaOperator{\AgdaInductiveConstructor{,}}\AgdaSpace{}%
\AgdaFunction{snd}\AgdaSpace{}%
\AgdaBound{z}\AgdaSymbol{)}\AgdaSpace{}%
\AgdaOperator{\AgdaFunction{≡}}\AgdaSpace{}%
\AgdaBound{z}\<%
\\
\>[4]\AgdaFunction{eta}\AgdaSpace{}%
\AgdaSymbol{(}\AgdaBound{x}\AgdaSpace{}%
\AgdaOperator{\AgdaInductiveConstructor{,}}\AgdaSpace{}%
\AgdaBound{y}\AgdaSymbol{)}\AgdaSpace{}%
\AgdaSymbol{=}\AgdaSpace{}%
\AgdaInductiveConstructor{rfl}\AgdaSpace{}%
\AgdaSymbol{(}\AgdaBound{x}\AgdaSpace{}%
\AgdaOperator{\AgdaInductiveConstructor{,}}\AgdaSpace{}%
\AgdaBound{y}\AgdaSymbol{)}\<%
\end{code}
Since we know from the previous section that these functions entail
Axiom~K, the above definitions have to use Agda's default
\texttt{--with-K} option to switch the existing implementation of
dependent pattern matching~\citep{CockxJ:eladcm} back to the original
version due to \cite{CoquandT:patmdt}, which is known to imply
Axiom~K~\citep{GoguenH:elidpm}. More precisely, it is only the matches
on the two occurrences of the pattern
$\AgdaInductiveConstructor{rfl}\,\_$ in the definition of $\afun{tpt}$
that involve an implicit use of Axiom~K (to discharge the unification
constraints $\aarg{A}\mathrel{≐}\aarg{A}$ and
$\aarg{B\,x}\mathrel{≐}\aarg{B\,x}$); all the other functions can be
defined without Axiom~K.

\section{Conclusion}
\label{sec:con}

This paper has investigated heterogeneous equality and produced a
simple collection of axioms for its typal form, in the spirit
of~\cite{CoquandT:equdtt}. The point of view is foundational. From a
practical perspective, the use of heterogeneous equality has much to
recommend it for formalizing mathematics in dependent type theory when
assuming uniqueness of identity proofs\footnote{Such is the approach
  of Lean~\citep{Lean} since version~3, for example.}; but that is
another story.


\appendix
\section*{Appendix: typal homogeneous equality without K}
\begin{code}[hide]%
\>[0]\AgdaKeyword{module}\AgdaSpace{}%
\AgdaModule{Appendix}\AgdaSpace{}%
\AgdaKeyword{where}\<%
\end{code}

In this appendix, for completeness sake we consider axioms in
dependent type theory without Axiom~K for
\emph{homogeneous} equality types 
\begin{code}%
\>[0][@{}l@{\AgdaIndent{1}}]%
\>[2]\AgdaKeyword{postulate}\<%
\\
\>[2][@{}l@{\AgdaIndent{0}}]%
\>[4]\AgdaOperator{\AgdaPostulate{\AgdaUnderscore{}≡\AgdaUnderscore{}}}\AgdaSpace{}%
\AgdaSymbol{:}\AgdaSpace{}%
\AgdaSymbol{∀\{}\AgdaBound{l}\AgdaSymbol{\}\{}\AgdaBound{A}\AgdaSpace{}%
\AgdaSymbol{:}\AgdaSpace{}%
\AgdaPrimitiveType{Set}\AgdaSpace{}%
\AgdaBound{l}\AgdaSymbol{\}}\AgdaSpace{}%
\AgdaSymbol{→}\AgdaSpace{}%
\AgdaBound{A}\AgdaSpace{}%
\AgdaSymbol{→}\AgdaSpace{}%
\AgdaBound{A}\AgdaSpace{}%
\AgdaSymbol{→}\AgdaSpace{}%
\AgdaPrimitiveType{Set}\AgdaSpace{}%
\AgdaBound{l}\<%
\end{code}
following~\cite{CoquandT:equdtt}. (Since without Axiom~K heterogeneous
equality is not very useful, we do not bother to consider axiomatizing
$\heq$ in that setting.)  One of the axioms makes use of dependent product
types. Although one could axiomatize those types as we did in the main
part of the paper, it is simpler to use an inductive defintion and
corresponding pair patterns:
\begin{code}%
\>[2]\AgdaKeyword{data}\AgdaSpace{}%
\AgdaDatatype{∑}\AgdaSpace{}%
\AgdaSymbol{\{}\AgdaBound{l}\AgdaSpace{}%
\AgdaBound{m}\AgdaSymbol{\}(}\AgdaBound{A}\AgdaSpace{}%
\AgdaSymbol{:}\AgdaSpace{}%
\AgdaPrimitiveType{Set}\AgdaSpace{}%
\AgdaBound{l}\AgdaSymbol{)(}\AgdaBound{B}\AgdaSpace{}%
\AgdaSymbol{:}\AgdaSpace{}%
\AgdaBound{A}\AgdaSpace{}%
\AgdaSymbol{→}\AgdaSpace{}%
\AgdaPrimitiveType{Set}\AgdaSpace{}%
\AgdaBound{m}\AgdaSymbol{)}\AgdaSpace{}%
\AgdaSymbol{:}\AgdaSpace{}%
\AgdaPrimitiveType{Set}\AgdaSpace{}%
\AgdaSymbol{(}\AgdaBound{l}\AgdaSpace{}%
\AgdaOperator{\AgdaPrimitive{⊔}}\AgdaSpace{}%
\AgdaBound{m}\AgdaSymbol{)}\AgdaSpace{}%
\AgdaKeyword{where}\<%
\\
\>[2][@{}l@{\AgdaIndent{0}}]%
\>[4]\AgdaOperator{\AgdaInductiveConstructor{\AgdaUnderscore{},\AgdaUnderscore{}}}\AgdaSpace{}%
\AgdaSymbol{:}\AgdaSpace{}%
\AgdaSymbol{(}\AgdaBound{x}\AgdaSpace{}%
\AgdaSymbol{:}\AgdaSpace{}%
\AgdaBound{A}\AgdaSymbol{)}\AgdaSpace{}%
\AgdaSymbol{→}\AgdaSpace{}%
\AgdaBound{B}\AgdaSpace{}%
\AgdaBound{x}\AgdaSpace{}%
\AgdaSymbol{→}\AgdaSpace{}%
\AgdaDatatype{∑}\AgdaSpace{}%
\AgdaBound{A}\AgdaSpace{}%
\AgdaBound{B}\<%
\\
\>[2]\AgdaComment{-- concrete syntax for ∑-types}\<%
\\
\>[2]\AgdaKeyword{syntax}\AgdaSpace{}%
\AgdaDatatype{∑}\AgdaSpace{}%
\AgdaBound{A}\AgdaSpace{}%
\AgdaSymbol{(λ}\AgdaSpace{}%
\AgdaBound{x}\AgdaSpace{}%
\AgdaSymbol{→}\AgdaSpace{}%
\AgdaBound{B}\AgdaSymbol{)}\AgdaSpace{}%
\AgdaSymbol{=}\AgdaSpace{}%
\AgdaDatatype{∑}\AgdaSpace{}%
\AgdaBound{x}\AgdaSpace{}%
\AgdaDatatype{∶}\AgdaSpace{}%
\AgdaBound{A}\AgdaSpace{}%
\AgdaDatatype{,}\AgdaSpace{}%
\AgdaBound{B}\<%
\\
\>[2]\AgdaComment{-- dependent product projections}\<%
\\
\>[2]\AgdaKeyword{module}\AgdaSpace{}%
\AgdaModule{\AgdaUnderscore{}}\AgdaSpace{}%
\AgdaSymbol{\{}\AgdaBound{l}\AgdaSpace{}%
\AgdaBound{m}\AgdaSymbol{\}\{}\AgdaBound{A}\AgdaSpace{}%
\AgdaSymbol{:}\AgdaSpace{}%
\AgdaPrimitiveType{Set}\AgdaSpace{}%
\AgdaBound{l}\AgdaSymbol{\}\{}\AgdaBound{B}\AgdaSpace{}%
\AgdaSymbol{:}\AgdaSpace{}%
\AgdaBound{A}\AgdaSpace{}%
\AgdaSymbol{→}\AgdaSpace{}%
\AgdaPrimitiveType{Set}\AgdaSpace{}%
\AgdaBound{m}\AgdaSymbol{\}}\AgdaSpace{}%
\AgdaKeyword{where}\<%
\\
\>[2][@{}l@{\AgdaIndent{0}}]%
\>[4]\AgdaFunction{fst}\AgdaSpace{}%
\AgdaSymbol{:}\AgdaSpace{}%
\AgdaDatatype{∑}\AgdaSpace{}%
\AgdaBound{A}\AgdaSpace{}%
\AgdaBound{B}\AgdaSpace{}%
\AgdaSymbol{→}\AgdaSpace{}%
\AgdaBound{A}\<%
\\
\>[4]\AgdaFunction{fst}\AgdaSpace{}%
\AgdaSymbol{(}\AgdaBound{x}\AgdaSpace{}%
\AgdaOperator{\AgdaInductiveConstructor{,}}\AgdaSpace{}%
\AgdaSymbol{\AgdaUnderscore{}}\AgdaSpace{}%
\AgdaSymbol{)}\AgdaSpace{}%
\AgdaSymbol{=}\AgdaSpace{}%
\AgdaBound{x}\<%
\\
\>[4]\AgdaFunction{snd}\AgdaSpace{}%
\AgdaSymbol{:}\AgdaSpace{}%
\AgdaSymbol{(}\AgdaBound{z}\AgdaSpace{}%
\AgdaSymbol{:}\AgdaSpace{}%
\AgdaDatatype{∑}\AgdaSpace{}%
\AgdaBound{A}\AgdaSpace{}%
\AgdaBound{B}\AgdaSymbol{)}\AgdaSpace{}%
\AgdaSymbol{→}\AgdaSpace{}%
\AgdaBound{B}\AgdaSpace{}%
\AgdaSymbol{(}\AgdaFunction{fst}\AgdaSpace{}%
\AgdaBound{z}\AgdaSymbol{)}\<%
\\
\>[4]\AgdaFunction{snd}\AgdaSpace{}%
\AgdaSymbol{(\AgdaUnderscore{}}\AgdaSpace{}%
\AgdaOperator{\AgdaInductiveConstructor{,}}\AgdaSpace{}%
\AgdaBound{y}\AgdaSymbol{)}\AgdaSpace{}%
\AgdaSymbol{=}\AgdaSpace{}%
\AgdaBound{y}\<%
\end{code}
The axioms for homogeneous equality are
\AgdaTarget{refl}
\AgdaTarget{cntr}
\AgdaTarget{sbst}
\begin{code}%
\>[2]\AgdaKeyword{postulate}\<%
\\
\>[2][@{}l@{\AgdaIndent{0}}]%
\>[4]\AgdaPostulate{refl}%
\>[10]\AgdaSymbol{:}\AgdaSpace{}%
\AgdaSymbol{∀\{}\AgdaBound{l}\AgdaSymbol{\}\{}\AgdaBound{A}\AgdaSpace{}%
\AgdaSymbol{:}\AgdaSpace{}%
\AgdaPrimitiveType{Set}\AgdaSpace{}%
\AgdaBound{l}\AgdaSymbol{\}}\AgdaSpace{}%
\AgdaSymbol{(}\AgdaBound{x}\AgdaSpace{}%
\AgdaSymbol{:}\AgdaSpace{}%
\AgdaBound{A}\AgdaSymbol{)}\AgdaSpace{}%
\AgdaSymbol{→}\AgdaSpace{}%
\AgdaBound{x}\AgdaSpace{}%
\AgdaOperator{\AgdaPostulate{≡}}\AgdaSpace{}%
\AgdaBound{x}\<%
\\
\>[4]\AgdaPostulate{cntr}%
\>[10]\AgdaSymbol{:}\AgdaSpace{}%
\AgdaSymbol{∀\{}\AgdaBound{l}\AgdaSymbol{\}\{}\AgdaBound{A}\AgdaSpace{}%
\AgdaSymbol{:}\AgdaSpace{}%
\AgdaPrimitiveType{Set}\AgdaSpace{}%
\AgdaBound{l}\AgdaSymbol{\}\{}\AgdaBound{x}\AgdaSpace{}%
\AgdaBound{y}\AgdaSpace{}%
\AgdaSymbol{:}\AgdaSpace{}%
\AgdaBound{A}\AgdaSymbol{\}(}\AgdaBound{e}\AgdaSpace{}%
\AgdaSymbol{:}\AgdaSpace{}%
\AgdaBound{x}\AgdaSpace{}%
\AgdaOperator{\AgdaPostulate{≡}}\AgdaSpace{}%
\AgdaBound{y}\AgdaSymbol{)}\AgdaSpace{}%
\AgdaSymbol{→}\AgdaSpace{}%
\AgdaSymbol{(}\AgdaBound{x}\AgdaSpace{}%
\AgdaOperator{\AgdaInductiveConstructor{,}}\AgdaSpace{}%
\AgdaPostulate{refl}\AgdaSpace{}%
\AgdaBound{x}\AgdaSymbol{)}\AgdaSpace{}%
\AgdaOperator{\AgdaPostulate{≡}}\AgdaSpace{}%
\AgdaSymbol{(}\AgdaBound{y}\AgdaSpace{}%
\AgdaOperator{\AgdaInductiveConstructor{,}}\AgdaSpace{}%
\AgdaBound{e}\AgdaSymbol{)}\<%
\\
\>[4]\AgdaPostulate{sbst}%
\>[10]\AgdaSymbol{:}\AgdaSpace{}%
\AgdaSymbol{∀\{}\AgdaBound{l}\AgdaSpace{}%
\AgdaBound{m}\AgdaSymbol{\}\{}\AgdaBound{A}\AgdaSpace{}%
\AgdaSymbol{:}\AgdaSpace{}%
\AgdaPrimitiveType{Set}\AgdaSpace{}%
\AgdaBound{l}\AgdaSymbol{\}(}\AgdaBound{B}\AgdaSpace{}%
\AgdaSymbol{:}\AgdaSpace{}%
\AgdaBound{A}\AgdaSpace{}%
\AgdaSymbol{→}\AgdaSpace{}%
\AgdaPrimitiveType{Set}\AgdaSpace{}%
\AgdaBound{m}\AgdaSymbol{)\{}\AgdaBound{x}\AgdaSpace{}%
\AgdaBound{x′}\AgdaSpace{}%
\AgdaSymbol{:}\AgdaSpace{}%
\AgdaBound{A}\AgdaSymbol{\}}\AgdaSpace{}%
\AgdaSymbol{→}\AgdaSpace{}%
\AgdaBound{x}\AgdaSpace{}%
\AgdaOperator{\AgdaPostulate{≡}}\AgdaSpace{}%
\AgdaBound{x′}\AgdaSpace{}%
\AgdaSymbol{→}\AgdaSpace{}%
\AgdaBound{B}\AgdaSpace{}%
\AgdaBound{x}\AgdaSpace{}%
\AgdaSymbol{→}\AgdaSpace{}%
\AgdaBound{B}\AgdaSpace{}%
\AgdaBound{x′}\<%
\end{code}
Coquand also considers a regularity axiom for $\afun{sbst}$
($\mathsf{ax}_3$ in \emph{loc.cit.}), but one can do without that by
using Peter Lumsdaine's trick to correct $\afun{sbst}$ to a version
$\afun{subst}$ for which there is a proof
$\afun{substIsRegular} : \forall\aarg{b} \fun
\afun{subst}\,(\afun{refl}\,\aarg{x})\,\aarg{b} \eq \aarg{b}$, as
follows. The proof begins as for Lemma~\ref{lem:verlt} by
considering functions that are injective modulo $\eq$:
\begin{code}%
\>[2]\AgdaFunction{Inj}\AgdaSpace{}%
\AgdaSymbol{:}\AgdaSpace{}%
\AgdaSymbol{∀\{}\AgdaBound{l}\AgdaSymbol{\}(}\AgdaBound{A}\AgdaSpace{}%
\AgdaBound{B}\AgdaSpace{}%
\AgdaSymbol{:}\AgdaSpace{}%
\AgdaPrimitiveType{Set}\AgdaSpace{}%
\AgdaBound{l}\AgdaSymbol{)}\AgdaSpace{}%
\AgdaSymbol{→}\AgdaSpace{}%
\AgdaPrimitiveType{Set}\AgdaSpace{}%
\AgdaBound{l}\<%
\\
\>[2]\AgdaFunction{Inj}\AgdaSpace{}%
\AgdaBound{A}\AgdaSpace{}%
\AgdaBound{B}\AgdaSpace{}%
\AgdaSymbol{=}\AgdaSpace{}%
\AgdaDatatype{∑}\AgdaSpace{}%
\AgdaBound{f}\AgdaSpace{}%
\AgdaDatatype{∶}\AgdaSpace{}%
\AgdaSymbol{(}\AgdaBound{A}\AgdaSpace{}%
\AgdaSymbol{→}\AgdaSpace{}%
\AgdaBound{B}\AgdaSymbol{)}\AgdaSpace{}%
\AgdaDatatype{,}\AgdaSpace{}%
\AgdaSymbol{∀\{}\AgdaBound{x}\AgdaSpace{}%
\AgdaBound{y}\AgdaSymbol{\}}\AgdaSpace{}%
\AgdaSymbol{→}\AgdaSpace{}%
\AgdaBound{f}\AgdaSpace{}%
\AgdaBound{x}\AgdaSpace{}%
\AgdaOperator{\AgdaPostulate{≡}}\AgdaSpace{}%
\AgdaBound{f}\AgdaSpace{}%
\AgdaBound{y}\AgdaSpace{}%
\AgdaSymbol{→}\AgdaSpace{}%
\AgdaBound{x}\AgdaSpace{}%
\AgdaOperator{\AgdaPostulate{≡}}\AgdaSpace{}%
\AgdaBound{y}\<%
\\
\>[0]\<%
\\
\>[2]\AgdaFunction{id}\AgdaSpace{}%
\AgdaSymbol{:}\AgdaSpace{}%
\AgdaSymbol{∀\{}\AgdaBound{l}\AgdaSymbol{\}\{}\AgdaBound{A}\AgdaSpace{}%
\AgdaSymbol{:}\AgdaSpace{}%
\AgdaPrimitiveType{Set}\AgdaSpace{}%
\AgdaBound{l}\AgdaSymbol{\}}\AgdaSpace{}%
\AgdaSymbol{→}\AgdaSpace{}%
\AgdaBound{A}\AgdaSpace{}%
\AgdaSymbol{→}\AgdaSpace{}%
\AgdaBound{A}\<%
\\
\>[2]\AgdaFunction{id}\AgdaSpace{}%
\AgdaBound{x}\AgdaSpace{}%
\AgdaSymbol{=}\AgdaSpace{}%
\AgdaBound{x}\<%
\\
\>[0]\<%
\\
\>[2]\AgdaFunction{idInj}\AgdaSpace{}%
\AgdaSymbol{:}\AgdaSpace{}%
\AgdaSymbol{∀\{}\AgdaBound{l}\AgdaSymbol{\}(}\AgdaBound{A}\AgdaSpace{}%
\AgdaSymbol{:}\AgdaSpace{}%
\AgdaPrimitiveType{Set}\AgdaSpace{}%
\AgdaBound{l}\AgdaSymbol{)}\AgdaSpace{}%
\AgdaSymbol{→}\AgdaSpace{}%
\AgdaFunction{Inj}\AgdaSpace{}%
\AgdaBound{A}\AgdaSpace{}%
\AgdaBound{A}\<%
\\
\>[2]\AgdaFunction{idInj}\AgdaSpace{}%
\AgdaSymbol{\AgdaUnderscore{}}\AgdaSpace{}%
\AgdaSymbol{=}\AgdaSpace{}%
\AgdaSymbol{(}\AgdaFunction{id}\AgdaSpace{}%
\AgdaOperator{\AgdaInductiveConstructor{,}}\AgdaSpace{}%
\AgdaFunction{id}\AgdaSymbol{)}\<%
\end{code}
But then to construct $\afun{subst}$ and $\afun{substIsRegular}$, one
has to work a bit harder than in the proof of the lemma, because of
the lack of uniqueness of identity proofs:
\AgdaTarget{subst}
\AgdaTarget{substIsRegular}
\begin{code}%
\>[2]\AgdaKeyword{module}\AgdaSpace{}%
\AgdaModule{\AgdaUnderscore{}}\AgdaSpace{}%
\AgdaSymbol{\{}\AgdaBound{l}\AgdaSpace{}%
\AgdaBound{m}\AgdaSymbol{\}\{}\AgdaBound{A}\AgdaSpace{}%
\AgdaSymbol{:}\AgdaSpace{}%
\AgdaPrimitiveType{Set}\AgdaSpace{}%
\AgdaBound{l}\AgdaSymbol{\}(}\AgdaBound{B}\AgdaSpace{}%
\AgdaSymbol{:}\AgdaSpace{}%
\AgdaBound{A}\AgdaSpace{}%
\AgdaSymbol{→}\AgdaSpace{}%
\AgdaPrimitiveType{Set}\AgdaSpace{}%
\AgdaBound{m}\AgdaSymbol{)\{}\AgdaBound{x}\AgdaSpace{}%
\AgdaSymbol{:}\AgdaSpace{}%
\AgdaBound{A}\AgdaSymbol{\}}\AgdaSpace{}%
\AgdaKeyword{where}\<%
\\
\>[2][@{}l@{\AgdaIndent{0}}]%
\>[4]\AgdaFunction{Inj₂}\AgdaSpace{}%
\AgdaSymbol{:}\AgdaSpace{}%
\AgdaSymbol{\{}\AgdaBound{y}\AgdaSpace{}%
\AgdaBound{z}\AgdaSpace{}%
\AgdaSymbol{:}\AgdaSpace{}%
\AgdaBound{A}\AgdaSymbol{\}}\AgdaSpace{}%
\AgdaSymbol{→}\AgdaSpace{}%
\AgdaBound{x}\AgdaSpace{}%
\AgdaOperator{\AgdaPostulate{≡}}\AgdaSpace{}%
\AgdaBound{y}\AgdaSpace{}%
\AgdaSymbol{→}\AgdaSpace{}%
\AgdaBound{x}\AgdaSpace{}%
\AgdaOperator{\AgdaPostulate{≡}}\AgdaSpace{}%
\AgdaBound{z}\AgdaSpace{}%
\AgdaSymbol{→}\AgdaSpace{}%
\AgdaFunction{Inj}\AgdaSpace{}%
\AgdaSymbol{(}\AgdaBound{B}\AgdaSpace{}%
\AgdaBound{y}\AgdaSymbol{)}\AgdaSpace{}%
\AgdaSymbol{(}\AgdaBound{B}\AgdaSpace{}%
\AgdaBound{z}\AgdaSymbol{)}\<%
\\
\>[4]\AgdaFunction{Inj₂}\AgdaSpace{}%
\AgdaSymbol{\{}\AgdaBound{y}\AgdaSymbol{\}}\AgdaSpace{}%
\AgdaBound{p}\AgdaSpace{}%
\AgdaBound{q}\AgdaSpace{}%
\AgdaSymbol{=}\<%
\\
\>[4][@{}l@{\AgdaIndent{0}}]%
\>[6]\AgdaPostulate{sbst}\AgdaSpace{}%
\AgdaSymbol{(λ}\AgdaSpace{}%
\AgdaBound{z′}\AgdaSpace{}%
\AgdaSymbol{→}\AgdaSpace{}%
\AgdaFunction{Inj}\AgdaSpace{}%
\AgdaSymbol{(}\AgdaBound{B}\AgdaSpace{}%
\AgdaBound{y}\AgdaSymbol{)}\AgdaSpace{}%
\AgdaSymbol{(}\AgdaBound{B}\AgdaSpace{}%
\AgdaBound{z′}\AgdaSymbol{))}\AgdaSpace{}%
\AgdaBound{q}\<%
\\
\>[6][@{}l@{\AgdaIndent{0}}]%
\>[7]\AgdaSymbol{(}\AgdaPostulate{sbst}\AgdaSpace{}%
\AgdaSymbol{(λ}\AgdaSpace{}%
\AgdaBound{y′}\AgdaSpace{}%
\AgdaSymbol{→}\AgdaSpace{}%
\AgdaFunction{Inj}\AgdaSpace{}%
\AgdaSymbol{(}\AgdaBound{B}\AgdaSpace{}%
\AgdaBound{y′}\AgdaSymbol{)}\AgdaSpace{}%
\AgdaSymbol{(}\AgdaBound{B}\AgdaSpace{}%
\AgdaBound{x}\AgdaSymbol{))}\AgdaSpace{}%
\AgdaBound{p}\AgdaSpace{}%
\AgdaSymbol{(}\AgdaFunction{idInj}\AgdaSpace{}%
\AgdaSymbol{(}\AgdaBound{B}\AgdaSpace{}%
\AgdaBound{x}\AgdaSymbol{)))}\<%
\\
\\[\AgdaEmptyExtraSkip]%
\>[0]\<%
\\
\>[4]\AgdaFunction{sbst₂}\AgdaSpace{}%
\AgdaSymbol{:}\AgdaSpace{}%
\AgdaSymbol{\{}\AgdaBound{y}\AgdaSpace{}%
\AgdaBound{z}\AgdaSpace{}%
\AgdaSymbol{:}\AgdaSpace{}%
\AgdaBound{A}\AgdaSymbol{\}}\AgdaSpace{}%
\AgdaSymbol{→}\AgdaSpace{}%
\AgdaBound{x}\AgdaSpace{}%
\AgdaOperator{\AgdaPostulate{≡}}\AgdaSpace{}%
\AgdaBound{y}\AgdaSpace{}%
\AgdaSymbol{→}\AgdaSpace{}%
\AgdaBound{x}\AgdaSpace{}%
\AgdaOperator{\AgdaPostulate{≡}}\AgdaSpace{}%
\AgdaBound{z}\AgdaSpace{}%
\AgdaSymbol{→}\AgdaSpace{}%
\AgdaBound{B}\AgdaSpace{}%
\AgdaBound{y}\AgdaSpace{}%
\AgdaSymbol{→}\AgdaSpace{}%
\AgdaBound{B}\AgdaSpace{}%
\AgdaBound{z}\<%
\\
\>[4]\AgdaFunction{sbst₂}\AgdaSpace{}%
\AgdaBound{p}\AgdaSpace{}%
\AgdaBound{q}\AgdaSpace{}%
\AgdaSymbol{=}\AgdaSpace{}%
\AgdaFunction{fst}\AgdaSpace{}%
\AgdaSymbol{(}\AgdaFunction{Inj₂}\AgdaSpace{}%
\AgdaBound{p}\AgdaSpace{}%
\AgdaBound{q}\AgdaSymbol{)}\<%
\\
\>[0]\<%
\\
\>[4]\AgdaFunction{C}\AgdaSpace{}%
\AgdaSymbol{:}\AgdaSpace{}%
\AgdaSymbol{\{}\AgdaBound{y}\AgdaSpace{}%
\AgdaSymbol{:}\AgdaSpace{}%
\AgdaBound{A}\AgdaSymbol{\}(}\AgdaBound{p}\AgdaSpace{}%
\AgdaSymbol{:}\AgdaSpace{}%
\AgdaBound{x}\AgdaSpace{}%
\AgdaOperator{\AgdaPostulate{≡}}\AgdaSpace{}%
\AgdaBound{y}\AgdaSymbol{)(}\AgdaBound{b}\AgdaSpace{}%
\AgdaSymbol{:}\AgdaSpace{}%
\AgdaBound{B}\AgdaSpace{}%
\AgdaBound{x}\AgdaSymbol{)}\AgdaSpace{}%
\AgdaSymbol{→}\AgdaSpace{}%
\AgdaDatatype{∑}\AgdaSpace{}%
\AgdaBound{c}\AgdaSpace{}%
\AgdaDatatype{∶}\AgdaSpace{}%
\AgdaBound{B}\AgdaSpace{}%
\AgdaBound{y}\AgdaSpace{}%
\AgdaDatatype{,}\AgdaSpace{}%
\AgdaSymbol{(}\AgdaFunction{sbst₂}\AgdaSpace{}%
\AgdaBound{p}\AgdaSpace{}%
\AgdaBound{p}\AgdaSpace{}%
\AgdaBound{c}\AgdaSpace{}%
\AgdaOperator{\AgdaPostulate{≡}}\AgdaSpace{}%
\AgdaFunction{sbst₂}\AgdaSpace{}%
\AgdaSymbol{(}\AgdaPostulate{refl}\AgdaSpace{}%
\AgdaBound{x}\AgdaSymbol{)}\AgdaSpace{}%
\AgdaBound{p}\AgdaSpace{}%
\AgdaBound{b}\AgdaSymbol{)}\<%
\\
\>[4]\AgdaFunction{C}%
\>[1894I]\AgdaBound{p}\AgdaSpace{}%
\AgdaBound{b}\AgdaSpace{}%
\AgdaSymbol{=}\AgdaSpace{}%
\AgdaPostulate{sbst}\AgdaSpace{}%
\AgdaFunction{C′}\AgdaSpace{}%
\AgdaSymbol{(}\AgdaPostulate{cntr}\AgdaSpace{}%
\AgdaBound{p}\AgdaSymbol{)}\AgdaSpace{}%
\AgdaSymbol{(}\AgdaBound{b}\AgdaSpace{}%
\AgdaOperator{\AgdaInductiveConstructor{,}}\AgdaSpace{}%
\AgdaPostulate{refl}\AgdaSpace{}%
\AgdaSymbol{\AgdaUnderscore{})}\<%
\\
\>[.][@{}l@{}]\<[1894I]%
\>[6]\AgdaKeyword{where}\<%
\\
\>[6]\AgdaFunction{C′}\AgdaSpace{}%
\AgdaSymbol{:}\AgdaSpace{}%
\AgdaDatatype{∑}\AgdaSpace{}%
\AgdaBound{y}\AgdaSpace{}%
\AgdaDatatype{∶}\AgdaSpace{}%
\AgdaBound{A}\AgdaSpace{}%
\AgdaDatatype{,}\AgdaSpace{}%
\AgdaSymbol{(}\AgdaBound{x}\AgdaSpace{}%
\AgdaOperator{\AgdaPostulate{≡}}\AgdaSpace{}%
\AgdaBound{y}\AgdaSymbol{)}\AgdaSpace{}%
\AgdaSymbol{→}\AgdaSpace{}%
\AgdaPrimitiveType{Set}\AgdaSpace{}%
\AgdaBound{m}\<%
\\
\>[6]\AgdaFunction{C′}\AgdaSpace{}%
\AgdaSymbol{(}\AgdaBound{y}\AgdaSpace{}%
\AgdaOperator{\AgdaInductiveConstructor{,}}\AgdaSpace{}%
\AgdaBound{p}\AgdaSymbol{)}\AgdaSpace{}%
\AgdaSymbol{=}\AgdaSpace{}%
\AgdaDatatype{∑}\AgdaSpace{}%
\AgdaBound{c}\AgdaSpace{}%
\AgdaDatatype{∶}\AgdaSpace{}%
\AgdaBound{B}\AgdaSpace{}%
\AgdaBound{y}\AgdaSpace{}%
\AgdaDatatype{,}\AgdaSpace{}%
\AgdaSymbol{(}\AgdaFunction{sbst₂}\AgdaSpace{}%
\AgdaBound{p}\AgdaSpace{}%
\AgdaBound{p}\AgdaSpace{}%
\AgdaBound{c}\AgdaSpace{}%
\AgdaOperator{\AgdaPostulate{≡}}\AgdaSpace{}%
\AgdaFunction{sbst₂}\AgdaSpace{}%
\AgdaSymbol{(}\AgdaPostulate{refl}\AgdaSpace{}%
\AgdaBound{x}\AgdaSymbol{)}\AgdaSpace{}%
\AgdaBound{p}\AgdaSpace{}%
\AgdaBound{b}\AgdaSymbol{)}\<%
\\
\\[\AgdaEmptyExtraSkip]%
\>[4]\AgdaFunction{subst}\AgdaSpace{}%
\AgdaSymbol{:}\AgdaSpace{}%
\AgdaSymbol{\{}\AgdaBound{y}\AgdaSpace{}%
\AgdaSymbol{:}\AgdaSpace{}%
\AgdaBound{A}\AgdaSymbol{\}}\AgdaSpace{}%
\AgdaSymbol{→}\AgdaSpace{}%
\AgdaBound{x}\AgdaSpace{}%
\AgdaOperator{\AgdaPostulate{≡}}\AgdaSpace{}%
\AgdaBound{y}\AgdaSpace{}%
\AgdaSymbol{→}\AgdaSpace{}%
\AgdaBound{B}\AgdaSpace{}%
\AgdaBound{x}\AgdaSpace{}%
\AgdaSymbol{→}\AgdaSpace{}%
\AgdaBound{B}\AgdaSpace{}%
\AgdaBound{y}\<%
\\
\>[4]\AgdaFunction{subst}\AgdaSpace{}%
\AgdaBound{p}\AgdaSpace{}%
\AgdaBound{b}\AgdaSpace{}%
\AgdaSymbol{=}\AgdaSpace{}%
\AgdaFunction{fst}\AgdaSpace{}%
\AgdaSymbol{(}\AgdaFunction{C}\AgdaSpace{}%
\AgdaBound{p}\AgdaSpace{}%
\AgdaBound{b}\AgdaSymbol{)}\<%
\\
\\[\AgdaEmptyExtraSkip]%
\>[4]\AgdaFunction{substIsRegular}\AgdaSpace{}%
\AgdaSymbol{:}\AgdaSpace{}%
\AgdaSymbol{(}\AgdaBound{b}\AgdaSpace{}%
\AgdaSymbol{:}\AgdaSpace{}%
\AgdaBound{B}\AgdaSpace{}%
\AgdaBound{x}\AgdaSymbol{)}\AgdaSpace{}%
\AgdaSymbol{→}\AgdaSpace{}%
\AgdaFunction{subst}\AgdaSpace{}%
\AgdaSymbol{(}\AgdaPostulate{refl}\AgdaSpace{}%
\AgdaBound{x}\AgdaSymbol{)}\AgdaSpace{}%
\AgdaBound{b}\AgdaSpace{}%
\AgdaOperator{\AgdaPostulate{≡}}\AgdaSpace{}%
\AgdaBound{b}\<%
\\
\>[4]\AgdaFunction{substIsRegular}\AgdaSpace{}%
\AgdaBound{b}\AgdaSpace{}%
\AgdaSymbol{=}\AgdaSpace{}%
\AgdaFunction{snd}\AgdaSpace{}%
\AgdaSymbol{(}\AgdaFunction{Inj₂}\AgdaSpace{}%
\AgdaSymbol{(}\AgdaPostulate{refl}\AgdaSpace{}%
\AgdaBound{x}\AgdaSymbol{)}\AgdaSpace{}%
\AgdaSymbol{(}\AgdaPostulate{refl}\AgdaSpace{}%
\AgdaBound{x}\AgdaSymbol{))}\AgdaSpace{}%
\AgdaSymbol{(}\AgdaFunction{snd}\AgdaSpace{}%
\AgdaSymbol{(}\AgdaFunction{C}\AgdaSpace{}%
\AgdaSymbol{(}\AgdaPostulate{refl}\AgdaSpace{}%
\AgdaBound{x}\AgdaSymbol{)}\AgdaSpace{}%
\AgdaBound{b}\AgdaSymbol{))}\<%
\end{code}

\end{document}